\newcommand\numeq[1]%
\newcommand\geab[1]%
\DeclareMathAlphabet{\pazocal}{OMS}{zplm}{m}{n}
\DeclareMathOperator{\trace}{tr}
\DeclareMathOperator{\diag}{diag}
\newcommand{\Cov}{\mathrm{cov}}
\let\bbordermatrix\bordermatrix
\patchcmd{\bbordermatrix}{8.75}{4.75}{}{}
\patchcmd{\bbordermatrix}{\left(}{\left[}{}{}
\patchcmd{\bbordermatrix}{\right)}{\right]}{}{}
\newcommand{\T}{^{\mbox{\tiny T}}}
\newcommand{\sr}{\stackrel}
\newcommand{\rar}{\rightarrow}
\newcommand{\tri}{\sr{\triangle}{=}}
\newcommand{\be}{\begin{equation}}
\newcommand{\ee}{\end{equation}}
\newcommand{\bea}{\begin{eqnarray}}
\newcommand{\eea}{\end{eqnarray}}
\newcommand{\bes}{\begin{eqnarray*}}
\newcommand{\ees}{\end{eqnarray*}}
\newcommand{\bce}{\begin{center}}
\newcommand{\ece}{\end{center}}
\newcommand{\eeae}{\end{IEEEeqnarray}}
\def\VR{\kern-\arraycolsep\strut\vrule &\kern-\arraycolsep}
\def\vr{\kern-\arraycolsep & \kern-\arraycolsep}
\newtheorem{Algorithm}{Algorithm}[section]
\newcommand{\ben}{\begin{enumerate}}
\newcommand{\een}{\end{enumerate}}
\newcommand{\hso}{\hspace{.1in}}
\newcommand{\hst}{\hspace{.2in}}
\newtheorem{theorem}{Theorem}[section]
\newtheorem{remark}{Remark}[section]
\newtheorem{lemma}{Lemma}[section]
\newenvironment{list5}{
  \begin{list}{$\bullet$}{%
      \setlength{\itemsep}{0.05cm}
      \setlength{\labelsep}{0.22cm}
      \setlength{\labelwidth}{0.3cm}
      \setlength{\parsep}{0.0in} 
      \setlength{\parskip}{0.0in}
      \setlength{\topsep}{0in} 
      \setlength{\partopsep}{0.0in}
      \setlength{\leftmargin}{0.22in}}}
      {\end{list}}
 \title{Joint Rate Distortion Function of a Tuple of Correlated Multivariate Gaussian Sources with Individual Fidelity Criteria}
  \author{
   \IEEEauthorblockN{
     Evagoras Stylianou\IEEEauthorrefmark{1}, Charalambos D. Charalambous\IEEEauthorrefmark{2}, and
      Themistoklis Charalambous\IEEEauthorrefmark{3} 
     \\}
   \IEEEauthorblockA{
      \IEEEauthorrefmark{1}Department of Electrical and Computer Engineering, Technical University of Munich,\\
       \IEEEauthorrefmark{2}Department of Electrical and Computer Engineering,   University of Cyprus \\
   \IEEEauthorrefmark{3}Department of Electrical Engineering and Automation, School of Electrical Engineering, Aalto University
 \\
      Emails: evagoras.stylianou@tum.de, 
     chadcha@ucy.ac.cy,
        themistoklis.charalambous@aalto.fi}}
\begin{document}
\maketitle

%
%
%
%
\begin{abstract}
In this paper we analyze  the joint rate distortion function (RDF), for a tuple of correlated sources taking 
values in abstract alphabet spaces (i.e., continuous) subject to two individual distortion criteria. First, we  derive  structural properties of the realizations of the reproduction Random Variables (RVs), which induce the corresponding  optimal test channel distributions of the joint RDF. Second, we consider a tuple of  correlated multivariate jointly Gaussian RVs, $X_1 : \Omega \rar {\mathbb R}^{p_1},  X_2 : \Omega \rar  {\mathbb R}^{p_2}$ with two square-error fidelity criteria, and we derive additional structural properties of the optimal realizations, and   use these to characterize  the RDF as a convex optimization problem with respect to the parameters of the realizations. We  show that the computation of the joint RDF can be performed by semidefinite programming. Further, we derive closed-form expressions of the joint RDF, such that Gray's  \cite{gray1973} lower bounds hold with equality, and verify their consistency with the semidefinite programming computations.  
\end{abstract}

%
%
%
%
\section{Literature Review, Problem Formulation, and Main Contributions}

\subsection{Literature Review} \label{sec:Liter}

Gray  \cite[Theorem~3.1, Corollary~3.1]{gray1973} derived lower bounds on the joint rate distortion functions (RDFs), of  a tuple of Random Variables (RVs) taking values in arbitrary, abstract spaces,   $X_1: \Omega \rar {\mathbb X}_1$, $X_2: \Omega \rar {\mathbb X}_2$,  with a weighted distortion, expressed in terms of  conditional RDFs, and marginal RDFs. 
Gray and Wyner in \cite{sourcecoding}, characterized the rate distortion region of a tuple of correlated RVs,  using   the joint, conditional and marginal RDFs.  Xiao and Luo \cite[Theorem 6]{xiao} derived  the closed-form expression of the joint RDF for  a tuple of scalar-valued correlated Gaussian RVs, with two square-error distortion criteria, while 
 Lapidoth and Tinguely  \cite{lapidoth} re-derived Xiao's  and Luo's joint RDF using   an alternative method. Xu, Liu and Chen \cite{xu2015lossy} and Viswanatha, Akyol and Rose \cite{viswanatha2014lossy}, generalized Wyner's common information \cite{wyner1975common} to its lossy counterpart, as the minimum common  message rate on the Gray and Wyner rate region with sum rate equal to the joint RDF with two individual distortion functions. The  analysis in \cite{xu2015lossy,viswanatha2014lossy}, includes the application of a tuple of scalar-valued,  jointly Gaussian RVs. More recent work on rates that lie on the Gray and Wyner rate region are found in \cite{charalambous2020characterization}.

\subsection{Problem Formulation}
\subsubsection{The Joint RDF with Individual Distortion Functions}
This paper is concerned with the  joint RDF of a tuple of RVs taking values in abstract spaces  (i.e., continuous-valued RVs),  $X_1: \Omega  \rar  {\mathbb X}_1,  {X}_{2}: \Omega  \rar  {\mathbb X}_2$ of reconstructing $X_i$ by $\widehat{X}_i: \Omega  \rar  \widehat{\mathbb X}_i$, for  $i=1,2$, , subject to two  distortion functions   $d_{X_i} :{\mathbb X}_i \times \widehat{{\mathbb X}}_i \rar [0,\infty),i=1,2$,  defined by   
\begin{align} 
{R}_{X_1,X_2}(\Delta_1,\Delta_2) = \inf_{\pazocal{M}(\Delta_{1},\Delta_{2})} I(X_1,X_2;\widehat{X}_1,\widehat{X}_2) \label{jRDF_g}
\end{align}
where  $I(X_1,X_2;\widehat{X}_1,\widehat{X}_2)$ is the mutual information of RVs $(X_1,X_2)$ and $(\widehat{X}_1,\widehat{X}_2)$, the set $\pazocal{M}(\Delta_{1},\Delta_{2})$ is specified by
\begin{align}
&\pazocal{M}(\Delta_{1},\Delta_{2}) = \Big \{\widehat{X}_1: \Omega \rightarrow  \widehat{{\mathbb X}}_1,\;\widehat{X}_2: \Omega \rightarrow  \widehat{{\mathbb X}}_2 \Big |\; \mathbf{P}_{X_1,X_2,\widehat{X}_1,\widehat{X}_2}\;\;\mbox{has} \nonumber \\ &\mbox{$(X_1,X_2)$-marginal}\; \mathbf{P}_{X_1,X_2}, \mathbf{E} \big\{ d_{X_i}(X_i,\widehat{X}_i) \big \} \le \Delta_{i}, \; i=1,2  \Big \} \label{jRDF_g1}
\end{align}
and the level of distortions are $\Delta_i \in [0,\infty), i=1,2$.  The joint RDF characterizes  the infimum of all achievable  rates of a sequence of  rate distortion codes, $(f_E, g_D)$, as depicted in Figure~\ref{fig:jointRDF},  of  reconstructing $(X_1^n, X_2^n) \tri  \{(X_{1,t}, X_{2,t}): t=1,2, \ldots,n\}$, by  $(\widehat{X}_1^n, \widehat{X}_2^n) \tri  \{(\widehat{X}_{1,t}, \widehat{X}_{2,t}): t=1,2, \ldots,n\}$, where $ \widehat{X}_{i,t} : \Omega \rightarrow  \widehat{\mathbb X}_i  \;i=1,2,\;t=1,2, \ldots, n$ and ${\bf P}_{X_{1,t}, X_{2,t}}={\bf P}_{X_1,X_2}, \; \forall t$, with  distortion   $\frac{1}{n}{\bf E}\{d_{X_i} (X_i^n, \widehat{X}_i^n)\}\leq \Delta_i, i=1,2$, for sufficiently large $n$. The computation of  ${R}_{X_1,X_2}(\Delta_1,\Delta_2)$ is  indispensable in the characterization of the Gray and Wyner rate region, and in the above mentioned applications.  
\begin{figure}[t]
  \centering
  \includegraphics[width=0.98\columnwidth]{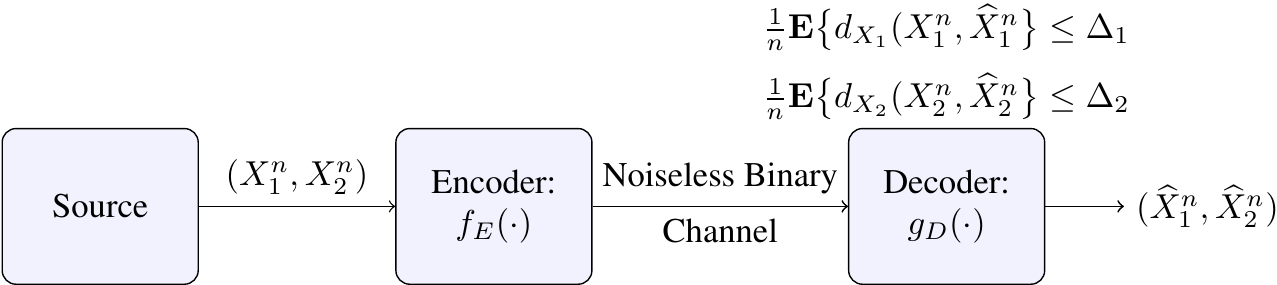}
  \vspace{-0.23cm}
\caption{Lossy Compression of correlated sources with individual distortion criteria\vspace{-0.4cm}.}
  \label{fig:jointRDF}
\end{figure}

Our first objective is to   identify {\it  structural properties} of  realizations of  the tuple of   RVs $(\widehat{X}_1,\widehat{X}_2)$ in the set   $\pazocal{M}(\Delta_1,\Delta_2)$, and structural properties of  corresponding   induced forward test channel distributions $\mathbf{P}_{\widehat{X}_1,\widehat{X}_2|X_1,X_2}$ or backward test channel distributions $\mathbf{P}_{X_1,X_2|\widehat{X}_1,\widehat{X}_2}$,  such that $\mathbf{E} \big\{ d_{X_i}(X_i,\widehat{X}_i) \big \} \le \Delta_i, \;i=1,2 $, i.e., to   characterize ${R}_{X_1,X_2}(\Delta_1,\Delta_2)$.

\renewcommand\arraystretch{1.1}
\subsubsection{The Joint RDF of a Tuple of Multivariate Gaussian Sources} 
Our second objective is to compute the joint RDF ${R}_{X_1,X_2}(\Delta_1,\Delta_2)$, of a tuple of jointly independent and identically distributed multivariate Gaussian RVs, $(X_1^n, X_2^n) \tri  \{(X_{1,t}, X_{2,t}): t=1,2, \ldots,n\}$,  where $ X_{i,t} : \Omega \rightarrow  {\mathbb R}^{p_i},\;i=1,2,\;t=1,2, \ldots, n$, i.e.,     ${\bf P}_{X_{1,t}, X_{2,t}}={\bf P}_{X_1,X_2}, \; \forall t$  is a multivariate jointly Gaussian distribution and denoted by  $(X_1, X_2)\in G(0,Q_{(X_1,X_2)})$, subject to two  square-error distortion  functions,  all defined by 
\begin{align}
&Q_{(X_{1,t}, X_{2,t})} =  {\mathbf E} \bigg\{ \begin{pmatrix} X_{1,t} \\ X_{2,t}  \end{pmatrix}  \begin{pmatrix}X_{1,t} \\ X_{2,t}  \end{pmatrix}\T \bigg\}=   \begin{pmatrix} Q_{X_1} & Q_{X_1,X_2} \\ Q_{X_1,X_2}\T & Q_{X_2} \end{pmatrix}  \label{prob_1}\\
& X_{1,t} \in G(0, Q_{X_1}), \hst  X_{2,t} \in G(0, Q_{X_2}), \hso \forall t,  \label{prob_2}\\
& \widehat{X}_{1,t}: \Omega  \rar  \widehat{\mathbb X}_1 \tri    {\mathbb R}^{p_1}, \hso \widehat{X}_{2,t}: \Omega  \rar  \widehat{\mathbb X}_2 \tri  {\mathbb R}^{p_2} \hso
\forall t,\label{prob_8} \\
& d_{X_i} (x_i^n, \widehat{x}_i^n)= \frac{1}{n} \sum_{t=1}^n ||x_{i,t}-\widehat{x}_{i,t}||_{{\mathbb R}^{p_{i}}}^2,\;\;\;i=1,2,  \label{prob_9}
\end{align}
where $p_i$ are  positive integers for $i=1,2$. Here $X \in G(0,Q_X)$ means $X$ is a Gaussian RV,  with zero mean and symmetric nonnegative definite covariance matrix $Q_X\succeq 0$.

\subsection{Main Contributions}

\begin{list5} 
\item[1)] The derivation of structural properties of   test channel distributions  $\mathbf{P}_{\widehat{X}_1,\widehat{X}_2|X_1,X_2}$, and  corresponding  realizations of the reproduction  RVs $(\widehat{X}_1,\widehat{X}_2)$ which induce these distributions, and characterize  $R_{X_1,X_2}(\Delta_1,\Delta_2)$. 

\item[2)] The characterization of  $R_{X_1,X_2}(\Delta_1,\Delta_2)$ for     jointly Gaussian multivariate sources, $X_1 : \Omega \rar {\mathbb R}^{p_1},  X_2 : \Omega \rar  {\mathbb R}^{p_2}$, with  square-error distortion criteria,  (\ref{prob_1})-(\ref{prob_9}),  parametrization of reproduction  RVs $(\widehat{X}_1,\widehat{X}_2)$ and corresponding test channels, and calculation of   $R_{X_1,X_2}(\Delta_1,\Delta_2)$ using convex numerical algorithms. Further, derivation of  closed-form expressions for $R_{X_1,X_2}(\Delta_1,\Delta_2)$, to verify the numerical algorithms.  This includes  the  distortion region ${\cal D}_{(X_1,X_2)}$, such that Gray's lower bound \cite{gray1973} holds with equality, 
 \begin{align}
 R_{X_1,X_2}(\Delta_1,\Delta_2)&=R_{X_1}(\Delta_1)+R_{X_2}(\Delta_2)-I(X_1;X_2). \label{equality}
 \end{align} 
\end{list5}
 the value of the RDF derived by 
Xiao and Luo \cite{xiao}. 
The tools used in this paper have been used to derive structural properties of the nonanticipative RDF of multivariate Gaussian Markov~\cite{ISIT:2020} and autoregressive~\cite{CDC:2019} processes.

\section{Properties of Realizations of Test Channels}
Let  ${\mathbb Z}$ and ${\mathbb Z}_+$ be the set of integers and positive integers, respectively. 
Let ${\mathbb R}$ be the set of real numbers. 
The expression $\mathbb{R}^{n \times m}$
denotes the set of $n$ by $m$ matrices with elements
 the real numbers, for $n, ~ m \in {\mathbb Z}_+$.
For the symmetric matrix $Q \in {\mathbb R}^{n \times n}$, 
inequality $Q \succ  0$ (resp. $Q \succeq 0$) means  the matrix is positive definite (resp. semi-definite).
The notation $Q_2 \succeq Q_1$ means that
$Q_2 - Q_1 \succeq  0$.
For any  matrix $A\in \mathbb{R}^{p\times m}, (p,m)\in {\mathbb Z}_+\times {\mathbb Z}_+$, we denote its transpose by $A\T$, and for $m=p$,  we denote its trace and its determinant  by  $\trace(A)$ and $\det\big(A\big)$, respectively. 
The $n$ by $n$ identity (resp. zero) matrix is represented by $I_n$ (resp. $0_n$). For matrix $A\in \mathbb{R}^{p\times p} $,  $\mathrm{diag}(A)$ is the  matrix with  diagonal entries those of  $A$ and zero elsewhere. $ \mathrm{Block}\text{-}\mathrm{diag}(A,B)$ is a square diagonal matrix in which the diagonal elements are square matrices $A\in \mathbb{R}^{p_1\times p_1} $ and $B\in \mathbb{R}^{p_2\times p_2} $, and the off-diagonal elements are zero.
Given a triple of real-valued RVs $X_i: \Omega \rar {\mathbb X}_i, i=1,2,3$, we say that RVs $(X_2, X_3)$ are conditional independent given RV $X_1$ if   ${\bf P}_{X_2, X_3|X_1}={\bf P}_{X_2|X_1}{\bf P}_{X_3|X_1} -$a.s (almost surely); the specification a.s is often omitted. 
 The mutual information between  RV $X$ and RV $Y$ is denoted by $I(X;Y)$.\\
The conditional covariance of the two-component vector RV $X = (X_1\T, X_2\T)\T$, $X_i: \Omega \rar {\mathbb R}^{p_i}, i=1,2$, conditioned on the two-component  vector $\widehat{X} = (\widehat{X}_1\T, \widehat{X}_2\T )\T$, $\widehat{X}_i: \Omega \rar {\mathbb R}^{p_i}, i=1,2$, is denoted by $Q_{(X_1,X_2)|\widehat{X}}\tri \Cov\Big(X,X\Big|\widehat{X}\Big) \succeq 0$, where
\vspace{-0.2cm}
\begin{align}
Q_{(X_1,X_2)|\widehat{X}} =& \begin{pmatrix}
Q_{X_1|\widehat{X}} & Q_{X_1,X_2|\widehat{X}}  \\
Q_{X_1,X_2|\widehat{X}}\T & Q_{X_2|\widehat{X}} \\
\end{pmatrix} \in {\mathbb R}^{(p_1+p_2)\times (p_1+p_2)}, \nonumber\\
Q_{X_1,X_2|\widehat{X}}  \tri &\Cov\Big(X_1,X_2\Big| \widehat{X}\Big). \nonumber \\
 \numeq 1 &  \mathbf{E} \Big\{ \Big( X_1 - {\mathbf E}\Big\{ X_1\Big|\widehat{X}\Big\} \Big) \Big(X_2 - {\mathbf E}\Big \{ X_2\Big|\widehat{X}\Big\} \Big)\T \Big\}  \nonumber \\
= &  \mathbf{E} \Big\{ E_1  E_2\T \Big\}, \hso E_i \tri X_i - {\mathbf E}\Big\{ X_i\Big|\widehat{X}\Big\}, \hso i=1,2 \nonumber
\end{align}
and where (1) holds if $(X_1, X_2, \widehat{X}_1, \widehat{X}_2)$ is jointly Gaussian. Similarly for $Q_{X_i|\widehat{X}}, i=1,2$. Consequently, for jointly Gaussian RVs $(X_1, X_2, \widehat{X}_1, \widehat{X}_2)$, and the  two-component vector  RV $E  \tri (E_1\T,E_2\T)\T$,  we have $Q_{(X_1,X_2)|\widehat{X}}=\Sigma_{(E_1,E_2)}$ (unconditional).

In Theorem~\ref{thm:jointlower} we identify a structural property of the tuple  $(\widehat{X}_1,\widehat{X}_2)$ to achieve a lower bound on $I(X_1,X_2; \widehat{X}_1, \widehat{X}_2)$, for any tuple of  RVs $(X_1,X_2)$ with arbitrary distribution ${\bf P}_{X_1,X_2}$. 
\begin{theorem}
 \label{thm:jointlower}
Let $(X_1, X_2, \widehat{X}_1,\widehat{X}_2)$ be arbitrary RVs taking values in  the abstract spaces ${{\mathbb X}}_1 \times {{\mathbb X}}_2\times \widehat{{\mathbb X}}_1\times \widehat{{\mathbb X}}_2$,  with arbitrary joint  distribution ${\bf P}_{X_1,X_2, \widehat{X}_1,\widehat{X}_2}$,  and ${{\mathbb X}}_1 \times {{\mathbb X}}_2-$joint marginal the fixed distribution ${\bf P}_{X_1,X_2}$ of $(X_1, X_2)$.\\ 
(a) Define 
\vspace{-0.2cm} 
\begin{align}
&\overline{X}_i^{\mathrm{cm}} = g_i^{\mathrm{cm}}\big (\widehat{X}_1,\widehat{X}_2\big) \tri   \mathbf{E} \Big\{ X_i\Big|\widehat{X} \Big\}, \quad i=1,2, \\
&g_i^{\mathrm{cm}}: \widehat{\mathbb X}_1\times \widehat{\mathbb X}_2 \rightarrow \widehat{\mathbb X}_i,\hso \mbox{$g_i^{\mathrm{cm}}(\cdot)$ are measurable functions,} \; i=1,2 \nonumber.
\end{align}
Then, the following inequality holds: 
\begin{align}
I(X_1,X_2;\widehat{X}_1,\widehat{X}_2) &\ge I\big (X_1,X_2;{{g}}_1^{\mathrm{cm}} (\widehat{X}_1,\widehat{X}_2),{{g}}_2^{\mathrm{cm}} (\widehat{X}_1,\widehat{X}_2)\big ).\label{eq:MutualInfoIneq} 
\end{align}
Moreover,  if there exist RVs $(\widehat{X}_1, \widehat{X}_2)$ such that  the functions $g_i^{\mathrm{cm}}(\cdot,\cdot)$ satisfy $g_i^{\mathrm{cm}} (\widehat{X}_1,\widehat{X}_2)  = \widehat{X}_i-\text{a.s}$ for $i=1,2$, then the inequality in \eqref{eq:MutualInfoIneq} holds with equality.\\
(b) Let  ${\mathbb X}_1 \times {\mathbb X}_2\times \widehat{{\mathbb X}}_1\times \widehat{{\mathbb X}}_2={\mathbb R}^{p_1} \times {\mathbb R}^{p_2}\times {\mathbb R}^{p_1}\times {\mathbb R}^{p_2}$, $p_1,p_2 \in {\mathbb Z}_+$. 
 For all measurable functions $g_i(\widehat{X}_1, \widehat{X}_2)$, $i=1,2$ then 
\begin{align}
&{\bf E}\Big\{\big|\big|X_i-g_i(\widehat{X}_1, \widehat{X}_2)\big|\big|_{{\mathbb R}^{p_i}}^2\Big\}\geq {\bf E}\Big\{\big|\big|X_i-\mathbf{E} \Big\{ X_i\Big|\widehat{X} \Big\}\big|\big|_{{\mathbb R}^{p_i}}^2\Big\}, i=1,2.\nonumber
\end{align}
(c) If  ${\mathbb X}_1 \times {\mathbb X}_2\times \widehat{{\mathbb X}}_1\times \widehat{{\mathbb X}}_2={\mathbb R}^{p_1} \times {\mathbb R}^{p_2}\times {\mathbb R}^{p_1}\times {\mathbb R}^{p_2}$, $p_1,p_2 \in {\mathbb Z}_+$,  $d_{X_i}(x_i,\widehat{x}_i) = || x_i-\widehat{x}_i||_{\mathbb{R}^{p_i}}^2,
\;i=1,2$,  $g_i^{\mathrm{cm}} (\widehat{X}_1,\widehat{X}_2)  = \widehat{X}_i-\text{a.s}, \;i=1,2$,    then the joint RDF  of (\ref{jRDF_g}) is characterized by 
\begin{align} 
{R}_{X_1,X_2}(\Delta_1,\Delta_2) = \inf_{\pazocal{M}^{\mathrm{cm}}(\Delta_{1},\Delta_{2})} I(X_1,X_2;\widehat{X}_1,\widehat{X}_2) \label{jRDF_g_cm}
\end{align}
where $\pazocal{M}^{\mathrm{cm}}(\Delta_{1},\Delta_{2})$ is specified by the subset  of  $ \pazocal{M}(\Delta_1,\Delta_2)$, with the additional restriction   $\widehat{X}_i= \mathbf{E} \Big\{ X_i\Big|\widehat{X} \Big\}, i=1,2$.
\end{theorem}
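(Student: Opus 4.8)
The plan is to prove (a) by the data processing inequality, (b) by the orthogonality principle of conditional expectation, and then obtain (c) by combining the two.

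For part (a), I would observe that $g_1^{\mathrm{cm}},g_2^{\mathrm{cm}}$ are deterministic measurable functions of $(\widehat{X}_1,\widehat{X}_2)$, so that
$$(X_1,X_2)\rightarrow(\widehat{X}_1,\widehat{X}_2)\rightarrow\big(g_1^{\mathrm{cm}}(\widehat{X}_1,\widehat{X}_2),g_2^{\mathrm{cm}}(\widehat{X}_1,\widehat{X}_2)\big)$$
is a Markov chain, the last component being conditionally independent of the first given the middle one. The data processing inequality then yields \eqref{eq:MutualInfoIneq} immediately. The equality case is trivial: if $g_i^{\mathrm{cm}}(\widehat{X}_1,\widehat{X}_2)=\widehat{X}_i$ a.s.\ for $i=1,2$, the reproduction pair on the right equals $(\widehat{X}_1,\widehat{X}_2)$ a.s., so the two mutual informations coincide.

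For part (b), I would add and subtract $\mathbf{E}\{X_i|\widehat{X}\}$ inside the squared norm and expand. The cross term pairs the estimation error $X_i-\mathbf{E}\{X_i|\widehat{X}\}$, which has zero conditional mean given $\widehat{X}$, with $\mathbf{E}\{X_i|\widehat{X}\}-g_i(\widehat{X})$, which is $\sigma(\widehat{X})$-measurable; by the tower property its expectation vanishes. The two remaining terms are nonnegative, giving the claimed inequality, with equality iff $g_i(\widehat{X})=\mathbf{E}\{X_i|\widehat{X}\}$ a.s.

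For part (c), since $\pazocal{M}^{\mathrm{cm}}(\Delta_1,\Delta_2)\subseteq\pazocal{M}(\Delta_1,\Delta_2)$ the infimum over the smaller set is no smaller; for the reverse, I would take any $(\widehat{X}_1,\widehat{X}_2)\in\pazocal{M}(\Delta_1,\Delta_2)$ and replace it by $\widetilde{X}_i\tri\mathbf{E}\{X_i|\widehat{X}\}$, $i=1,2$. Part (b) with the measurable choice $g_i(\widehat{X})=\widehat{X}_i$ shows the distortions do not increase, so $\widetilde{X}_i$ stays feasible and the $(X_1,X_2)$-marginal is unchanged, while part (a) shows $I(X_1,X_2;\widetilde{X}_1,\widetilde{X}_2)\le I(X_1,X_2;\widehat{X}_1,\widehat{X}_2)$. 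The step I expect to be the only delicate point is checking that the replacement actually lands in $\pazocal{M}^{\mathrm{cm}}$, i.e.\ that the self-consistency $\widetilde{X}_i=\mathbf{E}\{X_i|\widetilde{X}_1,\widetilde{X}_2\}$ holds. Because $(\widetilde{X}_1,\widetilde{X}_2)$ is a function of $(\widehat{X}_1,\widehat{X}_2)$ one has $\sigma(\widetilde{X})\subseteq\sigma(\widehat{X})$, and the tower property gives $\mathbf{E}\{X_i|\widetilde{X}\}=\mathbf{E}\{\mathbf{E}\{X_i|\widehat{X}\}|\widetilde{X}\}=\mathbf{E}\{\widetilde{X}_i|\widetilde{X}\}=\widetilde{X}_i$, the last equality since $\widetilde{X}_i$ is $\sigma(\widetilde{X})$-measurable. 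Hence the two infima coincide and \eqref{jRDF_g_cm} follows.
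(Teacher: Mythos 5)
Your proposal is correct and follows essentially the same route as the paper: part (a) via the data processing inequality (the paper writes out the chain-rule decomposition that underlies it), part (b) via the orthogonality of the conditional-expectation error, and part (c) by combining the two. You additionally verify the self-consistency $\widetilde{X}_i=\mathbf{E}\{X_i\,|\,\widetilde{X}\}$ via the tower property, a step the paper leaves implicit in its ``due to (a), (b)'' remark, and your argument for it is sound.
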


\begin{proof} (a) By properties of mutual information, we have 
\begin{align}
I(X_1,X_2;\widehat{X}_1,\widehat{X}_2) & \numeq{1} I(X_1,X_2;\widehat{X}_1,\widehat{X}_2,\overline{X}_1^{\mathrm{cm}},\overline{X}_2^{\mathrm{cm}})\nonumber \\ 
&\numeq{2} I(X_1,X_2;\widehat{X}_1,\widehat{X}_2|\overline{X}_1^{\mathrm{cm}},\overline{X}_2^{\mathrm{cm}}) \nonumber \\&\quad + I(X_1,X_2;\overline{X}_1^{\mathrm{cm}},\overline{X}_2^{\mathrm{cm}})\nonumber \\
& \geab{3} I(X_1,X_2;\overline{X}_1^{\mathrm{cm}},\overline{X}_2^{\mathrm{cm}}), \label{eq:LB}
\end{align}
where \((1)\) is due to $\overline{X}_i^{\mathrm{cm}},i=1,2$, are  functions of $(\widehat{X}_1,\widehat{X}_2)$,  \((2)\) is due to the chain rule of mutual information, and \((3)\)   is due to $I(X_1,X_2;\widehat{X}_1,\widehat{X}_2|\overline{X}_1^{\mathrm{cm}},\overline{X}_2^{\mathrm{cm}})\geq 0$.  Thus,  (\ref{eq:MutualInfoIneq}) is obtained. If  $g_i^{\mathrm{cm}} (\widehat{X}_1,\widehat{X}_2)  = \widehat{X}_i-\text{a.s},\quad i=1,2$,  hold,   then $I(X_1,X_2;\widehat{X}_1,\widehat{X}_2|\overline{X}_1^{\mathrm{cm}},\overline{X}_2^{\mathrm{cm}})=0$, and hence the inequality (\ref{eq:LB}) become equality.  
(\textit{b}) The inequality is well-known, due to the orthogonal projection theorem. 
(c) This is due to (a), (b).
\end{proof}

\section{Structural Properties of Test Channels and Characterization of Joint RDF for Multivariate Jointly Gaussian Sources}
This section makes use of  Theorem~\ref{thm:jointlower} to derive additional structural properties of test channels for the joint RDF  $R_{X_1,X_2}(\Delta_1,\Delta_2)$ of  jointly Gaussian sources with square-error distortions, defined by (\ref{prob_1})-(\ref{prob_9}). 


\begin{theorem}[Sufficient conditions for the lower bounds of Theorem \ref{thm:jointlower} to be achieved] \label{thm:suffjoint} 
Consider the quadruple of zero mean   RVs $(X_1, X_2, \widehat{X}_1,\widehat{X}_2)$ taking values in  ${\mathbb R}^{p_1} \times {\mathbb R}^{p_2}\times {\mathbb R}^{p_1}\times {\mathbb R}^{p_2}$, $p_1,p_2 \in {\mathbb Z}_+$, with jointly Gaussian  distribution i.e,  ${\bf P}_{X_1,X_2, \widehat{X}_1,\widehat{X}_2}$ = ${\bf P}^G_{X_1,X_2, \widehat{X}_1,\widehat{X}_2}$ and ${{\mathbb X}}_1 \times {{\mathbb X}}_2-$joint marginal the fixed distribution ${\bf P}_{X_1,X_2}$ of $(X_1, X_2)$. Define the vectors,
\begin{align}
X = \begin{pmatrix}
X_1 \\X_2
\end{pmatrix},\;\widehat{X} = \begin{pmatrix}
\widehat{X}_1\\\widehat{X}_2
\end{pmatrix},\;\overline{X}^{\mathrm{cm}} \tri   \mathbf{E}\Bigg\{ \begin{pmatrix}
X_1 \\ X_2 
\end{pmatrix} \Big| \widehat{X}  \Bigg\}= \begin{pmatrix} \overline{X}_1^{\mathrm{cm}} \\ \overline{X}_2^{\mathrm{cm}}\end{pmatrix} .\nonumber
\end{align}
(a) If  the vector of conditional means satisfy, 
\begin{align*}
\overline{X}^{\mathrm{cm}} & = \mathbf{E}\big \{X\big \} + \Cov\big(X,\widehat{X}\big)\big\{\Cov \big(\widehat{X},\widehat{X}\big)\big\}^{\dagger}\Big (\widehat{X} - \mathbf{E} \big\{\widehat{X}\big \} \Big )  = \widehat{X}      
\end{align*}
 where $\dagger$ denotes pseudoinvesrse,   then  the equalities hold:
\begin{align}
&\overline{X}_1^{\mathrm{cm}} \tri  \mathbf{E}\Big\{X_1\Big|\widehat{X} \Big\} = \widehat{X}
_1,\;\; \overline{X}_2^{\mathrm{cm}} \tri  \mathbf{E}\Big\{X_2\Big|\widehat{X}\Big\} = \widehat{X}
_2. \label{eq:MSESuff2}
\end{align}
(b) If  the inverse of $\Cov \big(\widehat{X},\widehat{X}\big)$ exists and  $\mathbf{E}\big \{X\big \}=\mathbf{E} \big\{\widehat{X}\big \}=0$, then (\ref{eq:MSESuff2}) holds if  Condition 1  holds:
\begin{align} 
&\text{{Condition 1.}} \;\;\;\Cov\big(X,\widehat{X}\big) \big\{\Cov \big(\widehat{X},\widehat{X}\big)\big\}^{-1} = I_{p_1+p_2}. \label{eq:MSEcon}
\end{align}
(c) The lower bounds of Theorem~\ref{thm:jointlower} are achieved, if  there exist $(\widehat{X}_1, \widehat{X}_2)$ such that $\overline{X}^{\mathrm{cm}}=\widehat{X}$, or the statement of (b)   holds. 

\end{theorem}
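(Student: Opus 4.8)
The plan is to treat the three parts in sequence, with part (a) carrying essentially all the content and parts (b) and (c) following as short specializations. For part (a) I would begin from the standard characterization of the conditional expectation of jointly Gaussian vectors: since $(X,\widehat{X})$ is jointly Gaussian, the conditional law of $X$ given $\widehat{X}$ is Gaussian with mean affine in $\widehat{X}$, and the minimum mean-square-error estimator coincides with the orthogonal projection onto the affine span of $\widehat{X}$. This yields exactly the stated formula $\overline{X}^{\mathrm{cm}} = \mathbf{E}\{X\} + \Cov(X,\widehat{X})\{\Cov(\widehat{X},\widehat{X})\}^{\dagger}(\widehat{X}-\mathbf{E}\{\widehat{X}\})$. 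The one delicate point is the use of the Moore--Penrose pseudoinverse when $\Cov(\widehat{X},\widehat{X})$ is singular; I would justify it by noting that $\widehat{X}-\mathbf{E}\{\widehat{X}\}$ lies almost surely in the range of $\Cov(\widehat{X},\widehat{X})$, so $\{\Cov(\widehat{X},\widehat{X})\}^{\dagger}$ acts as a genuine inverse on the relevant subspace and the projection is well defined a.s. Given the hypothesis $\overline{X}^{\mathrm{cm}}=\widehat{X}$, I would then partition the stacked vectors conformably into their $p_1$- and $p_2$-blocks: since $\overline{X}^{\mathrm{cm}}=((\overline{X}_1^{\mathrm{cm}})\T,(\overline{X}_2^{\mathrm{cm}})\T)\T$ and $\widehat{X}=(\widehat{X}_1\T,\widehat{X}_2\T)\T$, equality of the full vectors is equivalent to the two block equalities of \eqref{eq:MSESuff2}, using that $\mathbf{E}\{X_i|\widehat{X}\}$ is precisely the $i$-th block of $\mathbf{E}\{X|\widehat{X}\}$.

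For part (b) I would specialize the formula from (a): under $\mathbf{E}\{X\}=\mathbf{E}\{\widehat{X}\}=0$ and invertibility of $\Cov(\widehat{X},\widehat{X})$, the pseudoinverse becomes the ordinary inverse and $\overline{X}^{\mathrm{cm}}=\Cov(X,\widehat{X})\{\Cov(\widehat{X},\widehat{X})\}^{-1}\widehat{X}$. Substituting Condition~1, equation \eqref{eq:MSEcon}, the gain matrix reduces to the identity $I_{p_1+p_2}$, whence $\overline{X}^{\mathrm{cm}}=\widehat{X}$; part (a) then delivers \eqref{eq:MSESuff2}.

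For part (c) I would invoke Theorem~\ref{thm:jointlower}(a): the equalities $\overline{X}_i^{\mathrm{cm}}=\widehat{X}_i$, i.e. $g_i^{\mathrm{cm}}(\widehat{X}_1,\widehat{X}_2)=\widehat{X}_i$ a.s. for $i=1,2$, are exactly the sufficient condition under which the conditional mutual information term $I(X_1,X_2;\widehat{X}_1,\widehat{X}_2\mid\overline{X}_1^{\mathrm{cm}},\overline{X}_2^{\mathrm{cm}})$ vanishes and inequality \eqref{eq:MutualInfoIneq} becomes an equality. Thus whenever $\overline{X}^{\mathrm{cm}}=\widehat{X}$ holds, either by hypothesis or as delivered by (b) through Condition~1, the lower bound of Theorem~\ref{thm:jointlower} is attained.

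I expect the only genuine subtlety to be the rigorous justification of the pseudoinverse form of the Gaussian conditional mean in the rank-deficient case; everything else is block partitioning and direct substitution. The hard part, such as it is, is handling the almost-sure qualifier correctly when $\Cov(\widehat{X},\widehat{X})$ is singular in part (a), so that the affine estimator is identified with $\mathbf{E}\{X|\widehat{X}\}$ unambiguously.
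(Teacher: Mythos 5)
Your proposal is correct and fills in exactly the argument the paper compresses into its one-line proof (``Follows by properties of jointly Gaussian RVs''): the Gaussian conditional-mean formula with the pseudoinverse for part (a), substitution of Condition~1 for part (b), and the equality case of Theorem~\ref{thm:jointlower}(a) for part (c). Your attention to the rank-deficient case, where $\widehat{X}-\mathbf{E}\{\widehat{X}\}$ lies a.s.\ in the range of $\Cov(\widehat{X},\widehat{X})$, is a detail the paper omits but is handled correctly in your write-up.
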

\begin{proof} Follows by properties of jointly  Gaussian RVs.
\end{proof}
In the next lemma,  we apply Theorem~\ref{thm:jointlower} and Theorem~\ref{thm:suffjoint} to find  a parametric jointly Gaussian realization of $(\widehat{X}_1,\widehat{X}_2)$, that induces  the set of  test channels of the joint RDF  $R_{X_1,X_2}(\Delta_1,\Delta_2)$ for  (\ref{prob_1})-(\ref{prob_9}). 

\begin{lemma}[Preliminary parametrization of test channel]
\label{lemma_real}
\noindent Consider the joint RDF  $R_{X_1,X_2}(\Delta_1,\Delta_2)$  for  (\ref{prob_1})-(\ref{prob_9}). The following hold. \\
(a) A  jointly Gaussian distribution  $\mathbf{P}_{X_1,X_2,\widehat{X}_1,\widehat{X}_2}$  minimizes $I(X_1,X_2;\widehat{X}_1,\widehat{X}_2)$, subject to two  average distortions.\\
(b) The test channel distribution $\mathbf{P}_{\widehat{X}_1,\widehat{X}_2|X_1,X_2}$ of the joint RDF $R_{X_1,X_2}(\Delta_1,\Delta_2)$  is induced by the parametric Gaussian realization of $(\widehat{X}_1,\widehat{X}_2)$,  in  terms of the matrices $(H,Q_V)$, as
\begin{align}
&\widehat{X} = HX + V \label{eq:RealGaussModel1}\\
&H \in \mathbb{R}^{(p_1+p_2)\times (p_1+p_2)}, \quad V: \Omega \rightarrow \mathbb{R}^{(p_1+p_2)}, \\
& V \in G(0,Q_{(V_1,V_2)}),\; Q_{(V_1,V_2)} \succeq 0, \;
\mbox{$V$ and $X$ indep.},\label{eq:RealGaussModel2}
\end{align}
(c)  Consider part (b) and suppose there exist matrices $(H,Q_{(V_1,V_2)})$ such that Theorem \ref{thm:suffjoint}.(a) holds, i.e., $\overline{X}^{\mathrm{cm}} = \widehat{X}$-a.s., or in the special case Condition 1 holds.  Then the infimum in  $R_{X_1,X_2}(\Delta_1,\Delta_2)$  is taken over the subset $\pazocal{M}^{\mathrm{cm}, G}(\Delta_{1},\Delta_{2}) \subseteq \pazocal{M}^{\mathrm{cm}}(\Delta_{1},\Delta_{2})$,   
\begin{align}
\pazocal{M}^{\mathrm{cm}, G}&(\Delta_{1},\Delta_{2}) \tri  \Big \{\widehat{X}: \Omega \rightarrow \mathbb{R}^{(p_1+p_2)} \Big |\; \eqref{eq:RealGaussModel1}-\eqref{eq:RealGaussModel2} \; \mbox{hold},   \nonumber \\
&\;\;\overline{X}_i^{\mathrm{cm}} = \widehat{X}_i,\;  \mathbf{E} \big\{ ||X_i-\widehat{X}_i||_{\mathbb{R}^{p_i}}^2\big \} \le \Delta_{i}, i=1,2\Big \}
\end{align}
\end{lemma}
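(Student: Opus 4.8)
The plan is to prove the three parts in order, using the extremal (maximum-entropy) property of the Gaussian law for part (a), the orthogonal decomposition of jointly Gaussian vectors for part (b), and Theorem~\ref{thm:jointlower} together with (a)--(b) for part (c).

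For part (a), I would fix an arbitrary feasible law $\mathbf{P}_{X_1,X_2,\widehat{X}_1,\widehat{X}_2}\in\pazocal{M}(\Delta_1,\Delta_2)$ and introduce the jointly Gaussian law $\mathbf{P}^G$ having the same mean and the same covariance matrix. Because $(X_1,X_2)$ is Gaussian, matching the $(X_1,X_2)$-block of the covariance preserves the prescribed marginal $\mathbf{P}_{X_1,X_2}$, and because the distortions (\ref{prob_9}) are quadratic they depend on the joint law only through the second moments of $X_i-\widehat{X}_i$; hence $\mathbf{P}^G$ is again feasible. The only substantive step is then $I_{\mathbf{P}^G}(X;\widehat{X})\le I_{\mathbf{P}}(X;\widehat{X})$. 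I would write $I(X;\widehat{X})=h(X)-h(X\mid\widehat{X})$, observe that $h(X)$ is the same under $\mathbf{P}$ and $\mathbf{P}^G$ (the source marginal is a fixed Gaussian), and bound the conditional entropy by $h(X\mid\widehat{X})\le h\big(X-L\widehat{X}\big)\le \tfrac12\log\big((2\pi e)^{p_1+p_2}\det Q_{X\mid\widehat{X}}\big)$, where $L=\Cov(X,\widehat{X})\big\{\Cov(\widehat{X},\widehat{X})\big\}^{\dagger}$ is the linear least-squares gain and $Q_{X\mid\widehat{X}}$ the corresponding error covariance. Since $Q_{X\mid\widehat{X}}$ is fixed by the second moments alone it is common to $\mathbf{P}$ and $\mathbf{P}^G$, and under $\mathbf{P}^G$ both inequalities are equalities (the conditional law of a jointly Gaussian vector is Gaussian). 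This yields $h(X\mid\widehat{X})\le h_{\mathbf{P}^G}(X\mid\widehat{X})$ and hence the claim. I expect this entropy comparison to be the main obstacle, as one must apply conditioning-reduces-entropy and the maximum-entropy bound in the correct order and verify that the error covariance is genuinely shared by the two laws.

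For part (b), I would invoke the standard decomposition of a zero-mean jointly Gaussian pair: $\widehat{X}=\mathbf{E}\big\{\widehat{X}\mid X\big\}+V$ with $\mathbf{E}\big\{\widehat{X}\mid X\big\}=HX$, $H=\Cov(\widehat{X},X)\big\{\Cov(X,X)\big\}^{\dagger}$, and $V\tri\widehat{X}-HX$ zero-mean Gaussian, uncorrelated with---and therefore, by joint Gaussianity, independent of---$X$, with $Q_{(V_1,V_2)}=\Cov(\widehat{X},\widehat{X})-H\,\Cov(X,X)\,H\T\succeq0$. This is exactly (\ref{eq:RealGaussModel1})--(\ref{eq:RealGaussModel2}), and the induced $\mathbf{P}_{\widehat{X}_1,\widehat{X}_2\mid X_1,X_2}$ is the Gaussian channel with mean $HX$ and covariance $Q_{(V_1,V_2)}$.

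For part (c), I would combine (a), (b) and Theorem~\ref{thm:jointlower}. By (a) the infimum defining $R_{X_1,X_2}(\Delta_1,\Delta_2)$ may be restricted to jointly Gaussian laws, and by (b) every such law is the realization (\ref{eq:RealGaussModel1}) for some $(H,Q_{(V_1,V_2)})$. Given any such realization, Theorem~\ref{thm:jointlower}(a)--(b) show that replacing $\widehat{X}$ by $\overline{X}^{\mathrm{cm}}=\mathbf{E}\big\{X\mid\widehat{X}\big\}$ neither increases the mutual information nor the distortions; moreover the replacement is automatically self-consistent, since $\sigma(\overline{X}^{\mathrm{cm}})\subseteq\sigma(\widehat{X})$ gives $\mathbf{E}\big\{X\mid\overline{X}^{\mathrm{cm}}\big\}=\mathbf{E}\big\{\mathbf{E}\big\{X\mid\widehat{X}\big\}\mid\overline{X}^{\mathrm{cm}}\big\}=\overline{X}^{\mathrm{cm}}$, and $(X,\overline{X}^{\mathrm{cm}})$ remains jointly Gaussian. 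Under the hypothesis that $(H,Q_{(V_1,V_2)})$ can be chosen with $\overline{X}^{\mathrm{cm}}=\widehat{X}$ (equivalently, Condition~1 of Theorem~\ref{thm:suffjoint}), this self-consistent realization is of the form (\ref{eq:RealGaussModel1}) and thus lies in $\pazocal{M}^{\mathrm{cm},G}(\Delta_1,\Delta_2)$, so $\inf_{\pazocal{M}^{\mathrm{cm},G}}I\le R_{X_1,X_2}(\Delta_1,\Delta_2)$; the reverse inequality is immediate from $\pazocal{M}^{\mathrm{cm},G}\subseteq\pazocal{M}$ together with Theorem~\ref{thm:jointlower}(c), giving the stated equality.
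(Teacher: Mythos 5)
Your proposal is correct and follows essentially the same route as the paper: the maximum-entropy Gaussian comparison for (a) (which the paper only gestures at as ``similar to the classical RDF''), the linear-plus-independent-Gaussian-noise realization of the conditionally Gaussian test channel for (b), and the reduction via Theorem~\ref{thm:jointlower} and Theorem~\ref{thm:suffjoint} for (c). You simply supply the details that the paper's three-line sketch omits.
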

\begin{proof}(a) This  is similar to the classical RDF $R_{X}(\Delta)$ of a Gaussian RV $X\in G(0,Q_{X})$ with square-error distortion. (b) By part (a), the test channel distribution $\mathbf{P}_{\widehat{X}_1,\widehat{X}_2|X_1,X_2} $ is conditionally Gaussian with linear conditional mean $\mathbf{E} \big \{ X|\widehat{X}\big \}$ and non-random covariance $\Cov(X,\widehat{X}|X)$. Such a distribution is induced by the realizations \eqref{eq:RealGaussModel1}-\eqref{eq:RealGaussModel2}. (c) Follows from Theorem \ref{thm:suffjoint}.(c).
\end{proof}
Next, we construct   $(H,Q_{(V_1,V_2)})$ such that   $\overline{X}_i^{\mathrm{cm}} = \mathbf{E} \big \{ X_i |\widehat{X} \big \} = \widehat{X}_i-a.s$ for $i=1,2$, and  characterize  $R_{X_1,X_2}(\Delta_1,\Delta_2)$.

\begin{theorem}[Realization of optimal test channels and characterization of joint RDF] \label{thm:JointRDFChara}
Consider the joint RDF  $R_{X_1,X_2}(\Delta_1,\Delta_2)$  for  (\ref{prob_1})-(\ref{prob_9}).  \\
(a) The  test channel distribution $\mathbf{P}_{\widehat{X}_1,\widehat{X}_2|X_1,X_2}$ of the RDF $R_{X_1,X_2}(\Delta_1,\Delta_2)$ is induced by the parametric realization \eqref{eq:RealGaussModel1}-\eqref{eq:RealGaussModel2}, where the matrices, $(H,Q_V)$ satisfy, 
\begin{align}
 HQ_{(X_1,X_2)} &= Q_{(X_1,X_2)} - \Sigma_{(E_1,E_2)} =Q_{(X_1,X_2)}H\T \succeq 0,  \label{reali_1} \\
 Q_{(V_1,V_2)} &= HQ_{(X_1,X_2)}-H Q_{(X_1,X_2)} H\T \succeq 0.
  \label{reali_1_a}
\end{align}
Moreover,  $R_{X_1,X_2}(\Delta_1,\Delta_2)$ is characterized  by,
\begin{align}
&R_{X_1,X_2}(\Delta_1,\Delta_2)
= \inf_{{\pazocal{Q}}^\dagger(\Delta_1,\Delta_2)}    \frac{1}{2}\log \Big\{\frac{\det\big( Q_{(X_1,X_2)}\big)}{\det\big(\Sigma_{(E_1,E_2)}\big) } \Big\}  , \label{eq:JointRDFOpti}\\
&{\pazocal{Q}}^\dagger(\Delta_1,\Delta_2) \tri  \Big \{\Sigma_{(E_1,E_2)}: \:  (H, Q_{(V_1,V_2)}) \: \: \mbox{satisfy (\ref{reali_1}), (\ref{reali_1_a})},   \nonumber \\
&   \hspace{2.3cm} \trace \big ( \Sigma_{E_1}\big) \le \Delta_1,  \hso   \trace \big ( \Sigma_{E_2}\big) \le \Delta_2\Big \}. 
\end{align}
(b) Suppose  $Q_{(X_1,X_2)}\succ 0$. If $R_{X_1,X_2}(\Delta_1,\Delta_2)<\infty$,  then  the matrices, $(H,Q_{(V_1, V_2)})$, of part (a) reduce to, 
\begin{align}
&H = I_{p_1+p_2} - \Sigma_{(E_1,E_2)}Q_{(X_1,X_2)}^{-1},   \label{eq:Joint_inv_1}  \\
& Q_{(V_1,V_2)} = \Sigma_{(E_1,E_2)} -\Sigma_{(E_1,E_2)}Q_{(X_1,X_2)}^{-1}\Sigma_{(E_1,E_2)} \succeq 0, \label{eq:Joint_inv_2}\\
&Q_{(X_1,X_2)}- \Sigma_{(E_1,E_2)}\succeq 0,\hst \Longleftrightarrow  \label{eq:Joint_inv_3}      \\
&\Sigma_{(E_1,E_2)} -\Sigma_{(E_1,E_2)}Q_{(X_1,X_2)}^{-1}\Sigma_{(E_1,E_2)} \succeq 0.\label{eq:Joint_inv_4}     
\end{align}
and  ${\pazocal{Q}}^\dagger(\Delta_1,\Delta_2)$ in   (\ref{eq:JointRDFOpti}) is   replaced by  $\sr{\circ}{\pazocal{Q}}(\Delta_1,\Delta_2)$,   given by
\begin{align}
\sr{\circ}{\pazocal{Q}}(\Delta_1,\Delta_2) \tri & \Big \{\Sigma_{(E_1,E_2)}:  \: \: Q_{(X_1,X_2)} \succeq \Sigma_{(E_1,E_2)} \succeq 0,  \nonumber \\ 
&  \;\trace \big ( \Sigma_{E_1}\big) \le \Delta_1, \trace \big ( \Sigma_{E_2}\big) \le \Delta_2 \Big \}. \label{eq:jointConstr}
\end{align}
\end{theorem}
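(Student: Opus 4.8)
The plan is to build directly on Lemma~\ref{lemma_real}, which (together with Theorem~\ref{thm:suffjoint}) already restricts the optimization to jointly Gaussian realizations $\widehat{X}=HX+V$ satisfying $\overline{X}^{\mathrm{cm}}=\widehat{X}$ a.s., and then to translate that constraint into the explicit relations on $(H,Q_{(V_1,V_2)})$ via the orthogonality principle. First I would impose $\mathbf{E}\{X|\widehat{X}\}=\widehat{X}$, so that the error $E\tri X-\widehat{X}$ is, for jointly Gaussian variables, uncorrelated with $\widehat{X}$, i.e. $\Cov(E,\widehat{X})=0$. From $\Cov(X-\widehat{X},\widehat{X})=0$ I obtain $\Cov(X,\widehat{X})=\Cov(\widehat{X},\widehat{X})$, and expanding $\Sigma_{(E_1,E_2)}=\Cov(X-\widehat{X},X-\widehat{X})$ yields $\Cov(\widehat{X},\widehat{X})=Q_{(X_1,X_2)}-\Sigma_{(E_1,E_2)}$. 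Since the realization gives $\Cov(X,\widehat{X})=Q_{(X_1,X_2)}H\T$, equating the two expressions produces (\ref{reali_1}), the nonnegative definiteness being inherited from $\Cov(\widehat{X},\widehat{X})\succeq 0$; substituting $\Cov(\widehat{X},\widehat{X})=HQ_{(X_1,X_2)}H\T+Q_{(V_1,V_2)}$ then isolates $Q_{(V_1,V_2)}$ as in (\ref{reali_1_a}), nonnegative definite because it is a covariance.

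For the objective in (\ref{eq:JointRDFOpti}) I would invoke the standard expression for the mutual information of jointly Gaussian vectors, $I(X;\widehat{X})=\frac{1}{2}\log\big(\det Q_{(X_1,X_2)}/\det Q_{(X_1,X_2)|\widehat{X}}\big)$, and use the identity noted before Theorem~\ref{thm:jointlower} that $Q_{(X_1,X_2)|\widehat{X}}=\Sigma_{(E_1,E_2)}$ under the present hypotheses. The two square-error distortion constraints become $\trace(\Sigma_{E_i})\le\Delta_i$, $i=1,2$, since $\mathbf{E}\{\|X_i-\widehat{X}_i\|^2\}=\trace(\Sigma_{E_i})$. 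Parametrizing the feasible realizations by the achievable error covariances $\Sigma_{(E_1,E_2)}$ then gives the constrained minimization over $\pazocal{Q}^\dagger(\Delta_1,\Delta_2)$, completing part (a).

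For part (b), assuming $Q_{(X_1,X_2)}\succ 0$, I would solve (\ref{reali_1}) directly for $H=I_{p_1+p_2}-\Sigma_{(E_1,E_2)}Q_{(X_1,X_2)}^{-1}$ and substitute into (\ref{reali_1_a}); a short computation using the symmetry of $Q_{(X_1,X_2)}$ and $\Sigma_{(E_1,E_2)}$ collapses $HQ_{(X_1,X_2)}-HQ_{(X_1,X_2)}H\T$ to $\Sigma_{(E_1,E_2)}-\Sigma_{(E_1,E_2)}Q_{(X_1,X_2)}^{-1}\Sigma_{(E_1,E_2)}$, yielding (\ref{eq:Joint_inv_1})--(\ref{eq:Joint_inv_2}). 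The finiteness hypothesis $R_{X_1,X_2}(\Delta_1,\Delta_2)<\infty$ guarantees $\det\Sigma_{(E_1,E_2)}>0$ at the optimizer, so the relevant error covariances are nonsingular. The crux is then the equivalence (\ref{eq:Joint_inv_3})$\Longleftrightarrow$(\ref{eq:Joint_inv_4}), which I would establish by the congruence $P\tri Q_{(X_1,X_2)}^{-1/2}\Sigma_{(E_1,E_2)}Q_{(X_1,X_2)}^{-1/2}\succeq 0$: under it, $Q_{(X_1,X_2)}-\Sigma_{(E_1,E_2)}\succeq 0$ is equivalent to $P\preceq I$, while $\Sigma_{(E_1,E_2)}-\Sigma_{(E_1,E_2)}Q_{(X_1,X_2)}^{-1}\Sigma_{(E_1,E_2)}\succeq 0$ is equivalent to $P-P^2=P(I-P)\succeq 0$, and for a symmetric $P\succeq 0$ both conditions reduce to all eigenvalues lying in $[0,1]$. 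Hence $Q_{(V_1,V_2)}\succeq 0$ is equivalent to $Q_{(X_1,X_2)}\succeq\Sigma_{(E_1,E_2)}$, so the feasible set $\pazocal{Q}^\dagger$ collapses to $\sr{\circ}{\pazocal{Q}}(\Delta_1,\Delta_2)$ in (\ref{eq:jointConstr}).

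The main obstacle is not any single computation but this feasibility characterization in part (b): one must show that the positivity constraint on the noise covariance $Q_{(V_1,V_2)}$ is exactly the geometrically natural constraint $0\preceq\Sigma_{(E_1,E_2)}\preceq Q_{(X_1,X_2)}$, so that the minimization can be posed purely over error covariances without reference to $(H,Q_{(V_1,V_2)})$. The eigenvalue/congruence argument above handles this cleanly and, importantly, remains valid when $\Sigma_{(E_1,E_2)}$ is singular (equivalently, via a Schur-complement argument applied to $\big(\begin{smallmatrix}Q_{(X_1,X_2)} & \Sigma_{(E_1,E_2)}\\ \Sigma_{(E_1,E_2)} & \Sigma_{(E_1,E_2)}\end{smallmatrix}\big)\succeq 0$), which is precisely what renders the resulting program semidefinite.
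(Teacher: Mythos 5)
Your proposal is correct and follows essentially the same route as the paper's appendix: both derive \eqref{reali_1}--\eqref{reali_1_a} from the condition $\mathbf{E}\{X|\widehat{X}\}=\widehat{X}$ (the paper via the Gaussian conditional-mean formula and Condition~1, you via the equivalent orthogonality principle), and both then invert $Q_{(X_1,X_2)}$ to obtain \eqref{eq:Joint_inv_1}--\eqref{eq:Joint_inv_2}. The only substantive difference is that you supply an explicit congruence/eigenvalue argument for the equivalence \eqref{eq:Joint_inv_3}$\Longleftrightarrow$\eqref{eq:Joint_inv_4} and hence for the collapse of ${\pazocal{Q}}^\dagger$ to $\sr{\circ}{\pazocal{Q}}$, a step the paper leaves implicit with ``the rest follow.''
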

\begin{proof}
See Appendix \ref{App:realizations}.
\end{proof}
\begin{lemma}
\label{lem:char}
Consider  $R_{X_1,X_2}(\Delta_1,\Delta_2)$ of Theorem~\ref{thm:JointRDFChara}, defined by  (\ref{eq:JointRDFOpti}) and assume   $Q_{(X_1, X_2)}\succ 0$, and  $R_{X_1,X_2}(\Delta_1,\Delta_2)< +\infty$. The Lagrange functional is, 
\begin{align}
&{\cal L}\tri  \frac{1}{2}\log\Big\{ \frac{\det\big( Q_{(X_1,X_2)}\big)}{\det\big(\Sigma_{(E_1,E_2)}\big) } \Big\} + \trace \Big (\Theta \Big( \Sigma_{(E_1,E_2)} - Q_{(X_1,X_2)} \Big)\Big )    \nonumber \\
&+ \lambda_1 \Big ( \trace \Big (\Sigma_{E_1}\Big )  - \Delta_{1} \Big ) + \lambda_2 \Big ( \trace \Big (\Sigma_{E_2}\Big )  - \Delta_{2} \Big )  -\trace \Big (V\Sigma_{(E_1,E_2)} \Big ) \nonumber
\end{align}
where $ \Theta \succeq 0$, $V \succeq 0$, $\lambda_i \in [0,\infty), i=1,2$. The optimal $\Sigma_{(E_1, E_2)}  \in \sr{\circ}{\pazocal{Q}}(\Delta_{1},\Delta_{2})$ for $ R_{X_1,X_2}(\Delta_{1},\Delta_{2})$ is found as follows.  \\
(i) Stationarity:
\begin{align}
-\frac{1}{2} \Sigma_{(E_1,E_2)}^{-1}  +  \begin{bmatrix}
\lambda_1 I_{p_1} & 0 \\ 0 & \lambda_2 I_{p_2}
\end{bmatrix} + \Theta +V=0 . \label{eq:ErrorCovLag}
\end{align}
(ii) Complementary Slackness:
\begin{align}
&\lambda_1 \Big ( \trace \Big (\Sigma_{E_1}\Big )  - \Delta_{1} \Big ) = 0,\;\; \lambda_2 \Big ( \trace \Big (\Sigma_{E_2}\Big )  - \Delta_{2} \Big ) = 0, \label{eq:CompSlackLambda} \\
&\trace \Big (V\Sigma_{(E_1,E_2)}\Big ) = 0,\; \trace \Big (\Theta \Big( \Sigma_{(E_1,E_2)} - Q_{(X_1,X_2)} \Big)\Big ) = 0. \label{eq:CompSlackTheta}
\end{align}
(iii) Primal Feasibility: Defined  by $\sr{\circ}{\pazocal{Q}}(\Delta_{1},\Delta_{2})$.\\
(iv) Dual Feasibility: $\lambda_1 \geq 0, \hso \lambda_2 \geq 0, \hso \Theta \succeq 0, \hso V \succeq 0$.\\ Moreover, the following hold.\\
(a)  $V=0$, and   
\begin{align}
 \Sigma_{(E_1,E_2)}  = \frac{1}{2}\Bigg ( \begin{bmatrix}
\lambda_1 I_{p_1} & 0 \\ 0 & \lambda_2 I_{p_2}
\end{bmatrix} + \Theta \Bigg)^{-1} \succ 0 . \label{eq:ErrorCovLag_new}
\end{align}
(b) If  $Q_{(X_1,X_2)}-\Sigma_{(E_1,E_2)}  \succ 0$ 
then $\Theta=0$,  and 
\begin{align}
 \Sigma_{(E_1,E_2)}  = \frac{1}{2}\Bigg ( \begin{bmatrix}
\lambda_1 I_{p_1} & 0 \\ 0 & \lambda_2 I_{p_2}
\end{bmatrix}  \Bigg)^{-1} \succ 0 . \label{eq:ErrorCovLag_new_1}
\end{align}
\end{lemma}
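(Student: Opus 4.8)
The plan is to treat the characterization (\ref{eq:JointRDFOpti}) as a convex program in the single matrix variable $\Sigma_{(E_1,E_2)}$ and to recover the stated relations as its Karush--Kuhn--Tucker (KKT) system. First I would record that, with $Q_{(X_1,X_2)}\succ 0$ fixed, the objective $\frac{1}{2}\log\det(Q_{(X_1,X_2)})-\frac{1}{2}\log\det(\Sigma_{(E_1,E_2)})$ is strictly convex in $\Sigma_{(E_1,E_2)}$ on the positive definite cone, while the feasible set $\stackrel{\circ}{\pazocal{Q}}(\Delta_1,\Delta_2)$ is cut out by the linear matrix inequalities $Q_{(X_1,X_2)}\succeq\Sigma_{(E_1,E_2)}\succeq 0$ together with the two linear trace bounds; hence the problem is convex, and $\stackrel{\circ}{\pazocal{Q}}(\Delta_1,\Delta_2)$ is compact. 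Under strict feasibility (Slater's condition, valid for $\Delta_1,\Delta_2$ in the nontrivial range where $R_{X_1,X_2}(\Delta_1,\Delta_2)$ is finite), strong duality holds and the KKT conditions (i)--(iv) are both necessary and sufficient for optimality, so establishing them is exactly what is needed.

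Next I would obtain the stationarity condition (\ref{eq:ErrorCovLag}) by differentiating the Lagrangian $\mathcal{L}$ with respect to $\Sigma_{(E_1,E_2)}$ and setting the gradient to zero, using the standard identities $\nabla_{\Sigma}\log\det(\Sigma)=\Sigma^{-1}$ and $\nabla_{\Sigma}\trace(A\Sigma)=A$ for symmetric $A$. The only point needing attention is the two trace terms: writing $\trace(\Sigma_{E_1})=\trace\big(\mathrm{Block}\text{-}\mathrm{diag}(I_{p_1},0_{p_2})\,\Sigma_{(E_1,E_2)}\big)$ and likewise for $\Sigma_{E_2}$ shows that their joint contribution to the gradient is precisely the block-diagonal matrix $\mathrm{Block}\text{-}\mathrm{diag}(\lambda_1 I_{p_1},\lambda_2 I_{p_2})$. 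Conditions (ii)--(iv) are then the complementary slackness, primal feasibility, and dual feasibility statements attached to the three groups of constraints, and require no separate derivation.

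For part (a) I would first argue $\Sigma_{(E_1,E_2)}\succ 0$ at the optimum: since $R_{X_1,X_2}(\Delta_1,\Delta_2)<\infty$ there is a feasible point of finite value, whereas $-\frac{1}{2}\log\det(\Sigma_{(E_1,E_2)})\to+\infty$ as $\det(\Sigma_{(E_1,E_2)})\to 0$, so on the compact feasible set the minimizer cannot lie on the face $\det=0$ and is therefore nonsingular. Complementary slackness $\trace(V\Sigma_{(E_1,E_2)})=0$ with $V\succeq 0$ and $\Sigma_{(E_1,E_2)}\succ 0$ then forces $V=0$: writing $\trace(V\Sigma_{(E_1,E_2)})=\trace\big(\Sigma_{(E_1,E_2)}^{1/2}V\Sigma_{(E_1,E_2)}^{1/2}\big)$, the argument is positive semidefinite with zero trace, hence zero, and invertibility of $\Sigma_{(E_1,E_2)}^{1/2}$ gives $V=0$. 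Substituting $V=0$ into (\ref{eq:ErrorCovLag}) and inverting yields (\ref{eq:ErrorCovLag_new}); positive definiteness is automatic, since $\Sigma_{(E_1,E_2)}^{-1}=2\big(\mathrm{Block}\text{-}\mathrm{diag}(\lambda_1 I_{p_1},\lambda_2 I_{p_2})+\Theta\big)\succ 0$ forces the bracketed matrix to be invertible.

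Finally, for part (b), if $Q_{(X_1,X_2)}-\Sigma_{(E_1,E_2)}\succ 0$ then the upper LMI is strictly inactive, and the same positive-semidefinite trace-zero argument applied to the slackness identity $\trace\big(\Theta(\Sigma_{(E_1,E_2)}-Q_{(X_1,X_2)})\big)=0$ — rewritten as $\trace\big((Q_{(X_1,X_2)}-\Sigma_{(E_1,E_2)})^{1/2}\,\Theta\,(Q_{(X_1,X_2)}-\Sigma_{(E_1,E_2)})^{1/2}\big)=0$ with $\Theta\succeq 0$ and the congruence factor invertible — forces $\Theta=0$, whence (\ref{eq:ErrorCovLag_new_1}) follows from (\ref{eq:ErrorCovLag_new}). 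I expect the \emph{main obstacle} to be the rigorous justification that $V=0$ and, in (b), $\Theta=0$ — equivalently that the optimizer sits in the relative interior of the active cone constraints — which hinges on first securing $\Sigma_{(E_1,E_2)}\succ 0$ from finiteness of the RDF and then invoking the trace-zero lemma; the KKT bookkeeping and the matrix-calculus gradient are otherwise routine.
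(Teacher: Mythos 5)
Your proposal is correct and is precisely the ``standard derivation'' that the paper explicitly omits: a KKT analysis of the convex SDP (\ref{eq:JointRDFOpti}) over $\sr{\circ}{\pazocal{Q}}(\Delta_1,\Delta_2)$, with Slater's condition giving necessity and sufficiency, nonsingularity of the optimal $\Sigma_{(E_1,E_2)}$ from finiteness of the RDF, and the trace-zero argument forcing $V=0$ and (under $Q_{(X_1,X_2)}-\Sigma_{(E_1,E_2)}\succ 0$) $\Theta=0$. The only discrepancy is the sign of $V$ in (\ref{eq:ErrorCovLag}) relative to the $-\trace(V\Sigma_{(E_1,E_2)})$ term in ${\cal L}$, which appears to be a typo in the paper and is immaterial since $V=0$ at the optimum.
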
 
\begin{proof} The derivation is standard hence it is omitted.
\end{proof}

The next two theorems are obtained from  Lemma~\ref{lem:char}.

\begin{theorem}[Joint RDF for a positive surface]
 \label{thm:JointRDFGen}
Consider the characterization of  joint RDF $R_{X_1,X_2}(\Delta_1,\Delta_2)$ of Theorem~\ref{thm:JointRDFChara}, defined by  (\ref{eq:JointRDFOpti}), and  assume  $Q_{(X_1, X_2)}\succ 0$ (i.e., this implies $Q_{X_1}\succ 0, Q_{X_2}\succ 0$).
Define the set 
\begin{align}
{\cal D}_{(X_1,X_2)} = \bigg \{  &(\Delta_{1},\Delta_{2}) \in [0,\infty) \times[0,\infty) \bigg |   Q_{(X_1,X_2)}-\Sigma_{(E_1,E_2)} \succ 0 \bigg \}.  \nonumber
\end{align}
The joint RDF $R_{X_1,X_2}(\Delta_1,\Delta_2)$ for $( \Delta_1,\Delta_2\big ) \in\mathcal{D}_{(X_1,X_2)}$ is 
 \begin{align*}
&R_{X_1,X_2}\big (\Delta_1,\Delta_2 \big ) =\frac{1}{2}\log \bigg\{\frac{\det\big( Q_{(X_1,X_2)}\big)}{\det\big(\Sigma_{E_1}\big)\det\big(\Sigma_{E_2}\big) } \bigg\}=\mbox{(\ref{equality})}     \\
&\Sigma_{E_1} = \diag \Big (\frac{\Delta_1}{p_1} ,\ldots, \frac{\Delta_1}{p_1} \Big ), \;\; \Sigma_{E_2} = \diag \Big (\frac{\Delta_2}{p_2} ,\ldots, \frac{\Delta_2}{p_2} \Big )
      \end{align*}
and this  is achieved by the covariance matrix $ \Sigma_{(E_1,E_2)}$ with   $ \Sigma_{E_1,E_2}=Q_{X_1,X_2|\widehat{X}}=0$, and Gray's lower bound (\ref{equality}) holds.
\end{theorem}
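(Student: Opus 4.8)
The plan is to specialize the KKT conditions of Lemma~\ref{lem:char} to the interior region $\mathcal{D}_{(X_1,X_2)}$ and then evaluate the objective in closed form. First I would invoke Lemma~\ref{lem:char}(b): since $(\Delta_1,\Delta_2)\in\mathcal{D}_{(X_1,X_2)}$ means $Q_{(X_1,X_2)}-\Sigma_{(E_1,E_2)}\succ 0$, the constraint $Q_{(X_1,X_2)}\succeq\Sigma_{(E_1,E_2)}$ is inactive, so the complementary slackness condition \eqref{eq:CompSlackTheta} forces $\Theta=0$, and part (a) of the lemma already gives $V=0$. Substituting $\Theta=0$ and $V=0$ into the stationarity condition \eqref{eq:ErrorCovLag} yields directly
\begin{align}
\Sigma_{(E_1,E_2)} = \frac{1}{2}\begin{bmatrix}\lambda_1^{-1} I_{p_1} & 0 \\ 0 & \lambda_2^{-1} I_{p_2}\end{bmatrix},\nonumber
\end{align}
which is block-diagonal with scalar blocks. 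This is the crux: the optimal error covariance has \emph{no} cross-correlation, i.e.\ $\Sigma_{E_1,E_2}=0$, and each diagonal block is a scalar multiple of the identity.

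Next I would pin down the Lagrange multipliers using primal feasibility and complementary slackness \eqref{eq:CompSlackLambda}. Because the objective $-\tfrac12\log\det\Sigma_{(E_1,E_2)}$ is strictly decreasing in $\Sigma_{(E_1,E_2)}$ (in the Loewner order), at optimality both trace constraints must be active whenever the corresponding $\lambda_i>0$; I would argue that $\lambda_i>0$ holds on $\mathcal{D}_{(X_1,X_2)}$ since otherwise the unconstrained minimizer would be degenerate. Activeness gives $\trace(\Sigma_{E_i})=\Delta_i$, and combined with $\Sigma_{E_i}=\tfrac{1}{2\lambda_i}I_{p_i}$ this fixes $\tfrac{1}{2\lambda_i}=\Delta_i/p_i$, so
\begin{align}
\Sigma_{E_i}=\diag\Big(\tfrac{\Delta_i}{p_i},\ldots,\tfrac{\Delta_i}{p_i}\Big),\quad i=1,2,\nonumber
\end{align}
as claimed.

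Finally, I would evaluate the objective. Since $\Sigma_{(E_1,E_2)}=\mathrm{Block}\text{-}\mathrm{diag}(\Sigma_{E_1},\Sigma_{E_2})$ (the off-diagonal block vanishes), its determinant factors as $\det(\Sigma_{E_1})\det(\Sigma_{E_2})$, giving
\begin{align}
R_{X_1,X_2}(\Delta_1,\Delta_2)=\frac{1}{2}\log\frac{\det\big(Q_{(X_1,X_2)}\big)}{\det(\Sigma_{E_1})\det(\Sigma_{E_2})}.\nonumber
\end{align}
To obtain the equality \eqref{equality} with Gray's lower bound, I would expand $\det(Q_{(X_1,X_2)})$ via the Schur complement and recall that $R_{X_i}(\Delta_i)=\tfrac12\log\frac{\det(Q_{X_i})}{\det(\Sigma_{E_i})}$ is the classical multivariate Gaussian RDF with $\Sigma_{E_i}=\tfrac{\Delta_i}{p_i}I_{p_i}$, while $I(X_1;X_2)=\tfrac12\log\frac{\det(Q_{X_1})\det(Q_{X_2})}{\det(Q_{(X_1,X_2)})}$; algebraic recombination then produces $R_{X_1}(\Delta_1)+R_{X_2}(\Delta_2)-I(X_1;X_2)$.

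I expect the main obstacle to be the justification that the scalar form $\Sigma_{E_i}=\tfrac{\Delta_i}{p_i}I_{p_i}$ is genuinely optimal rather than merely stationary. The stationarity condition only forces each block to be proportional to the identity \emph{given} that $\Theta=0$; one must verify that the unconstrained (in the $Q_{(X_1,X_2)}\succeq\Sigma$ sense) minimizer indeed lands inside $\mathcal{D}_{(X_1,X_2)}$ for the given $(\Delta_1,\Delta_2)$, and that no better solution activates the Loewner constraint. Since the problem is convex (the objective is convex in $\Sigma_{(E_1,E_2)}$ over the PSD cone and the constraints are convex), the KKT conditions are sufficient, so it suffices to exhibit multipliers $(\lambda_1,\lambda_2,\Theta=0,V=0)$ satisfying all four conditions—which the construction above does—thereby certifying global optimality.
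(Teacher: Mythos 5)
Your proposal is correct and follows essentially the same route as the paper: the paper's proof is a one-line appeal to Lemma~\ref{lem:char} (in particular part (b), which forces $\Theta=0$, $V=0$ and a block-diagonal $\Sigma_{(E_1,E_2)}$ with scalar blocks, after which complementary slackness fixes $\tfrac{1}{2\lambda_i}=\Delta_i/p_i$), and your write-up simply fills in those steps together with the KKT-sufficiency and Gray-bound verifications that the paper leaves implicit.
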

\begin{proof} For any element of the set ${\cal D}_{(X_1,X_2)}$ then $Q_{(X_1,X_2)} - \Sigma_{(E_1,E_2)} \succ 0$, and the statements follow from   Lemma~\ref{lem:char}. 
\end{proof}

\begin{remark}
 For the scalar-valued RVs, i.e.,    $p_1=p_2=1$, we have verified that Lemma~\ref{lem:char} produces the  closed-form expression of $R_{X_1,X_2}(\Delta_1,\Delta_2)$ as derived in  \cite[Theorem 6]{xiao}. However, for the multivariate case of Lemma~\ref{lem:char}, to obtain the  closed-form expression is challenging. To make the problem tractable, in  Theorem~\ref{the:cvf},   we use the  canonical variable form of the tuple $(X_1, X_2)$, as described in  \cite{charalambous2020characterization} and \cite{anewapproach}.
\end{remark}
\noindent The algorithm to transform the tuple $(X_1,X_2)$ to the canonical variable form is presented below \cite[Algorithm 2.10]{anewapproach}. 
\begin{Algorithm}{ \cite[Algorithm 2.10]{anewapproach} Transformation of a variance matrix to its canonical variable form. \label{alg:cvf}\\}
Data : $p_1,p_2 \in Z_+,\;Q_{(X_1,X_2)} \in \mathbb{R}^{(p_1+p_2)\times(p_1+p_2)}$, satisfying $Q_{(X_1,X_2)} = Q_{(X_1,X_2)}^T \succ  0$, with decomposition \eqref{prob_1}.\\
1) Perform singular value decompositions (SVD), $Q_{X_i} = U_iD_iU_i^T,\;i=1,2$ with $U_i\in \mathbb{R}^{p_i\times p_1},\;$ orthogonal and $D_i = \diag(d_{1,1},\dots,d_{i,p_{i}})\in \mathbb{R}^{p_1\times p_1},\;\;d_{i,1} \ge d_{i,2} \ge \dots \ge d_{i,p_i}>0$. \\
2) Perform SVD of $
D_1^{-\frac{1}{2}}U_1^T Q_{X_1,X_2}U_2D_2^{-\frac{1}{2}} = U_3D_3U_4^T$ with $U_3\in \mathbb{R}^{p_1\times p_1},\;U_4\in \mathbb{R}^{p_2\times p_2}$ orthogonal and  \renewcommand\arraystretch{0.92}
\begin{align}
&D_3 = \mathrm{Block}\text{-}\diag\big(I_{p_{11}},D_4,0_{p_{13} \times  p_{23}} \big)\in \mathbb{R}^{p_1\times p_2},\nonumber \\&D_4 = \diag(d_{4,1},\dots,d_{4,p_{12}})\in \mathbb{R}^{p_{12}\times p_{22}},\;1 > d_{4,1} \ge \dots \ge d_{4,p_{12}}>0, \nonumber \\
& p_i = p_{i1} + p_{i2} + p_{i3},\;\;i=1,2,\quad p_{11} = p_{21},\quad p_{12} = p_{22} \nonumber
\end{align}
3) Compute the new variance matrix and the transformation to the canonical variable representation $(X_1 \mapsto S_1X_1,\;X_2\mapsto S_2X_2)$ according to  \renewcommand\arraystretch{1.3}
\begin{align}
Q_{\mathrm{\mathrm{cvf}}} = \begin{pmatrix}
I_{p_1} & D_3 \\
D_3^T & I_{p_2}
\end{pmatrix},\quad S_1 =U_3^TD_1^{-\frac{1}{2}}U_1^T,\;\;S_2 = U_4^TD_2^{-\frac{1}{2}}U_2^T \nonumber
\end{align}
\end{Algorithm}

 \begin{theorem} 
 \label{the:cvf}
Consider the statement of Theorem~\ref{thm:JointRDFChara}.(b), with $(X_1 , X_2 ) \in G(0,Q_{(X_1,X_2)}), Q_{(X_1,X_2)}\succ 0$. Determine the canonical variable form of the tuple $(X_1 , X_2 ) $, according to \cite[Definition~2.2]{anewapproach} by using algorithm Algorithm \ref{alg:cvf}, and restrict attention to indices, $p_{11} = p_{21}=0$. Then,
 $n = p_{12} =p_{22}$ and $p_1=p_{12}+p_{13},  p_2=p_{22}+p_{23}$. Similarly, transform $(E_1 , E_2 ) \in G(0,\Sigma_{(E_1,E_2)})$ with  $\overline{p}_{11} = \overline{p}_{21}=0$ and 
$\overline{n} = \overline{p}_{12} =\overline{p}_{22}$, $\overline{p}_1=\overline{p}_{12}+\overline{p}_{13},  \overline{p}_2=\overline{p}_{22}+\overline{p}_{23}$. \\ 
The joint RDF $R_{X_1,X_2}(\Delta_1, \Delta_2)$ of  Theorem~\ref{thm:JointRDFChara}.(b),
is equivalently characterized by  \vspace{-0.2cm}
\begin{align*}
&R_{X_1,X_2}(\Delta_1, \Delta_2)= \inf_{\sr{\circ}{\pazocal{Q}}(\Delta_1,\Delta_2)}  \frac{1}{2} \log \Big\{ \frac{\det\big(D_1
\big) \det \big(D_2\big)\det \big(Q_{\mathrm{cvf}}\big)}{\det \big(\overline{D}_1\big) \det \big(\overline{D}_2\big) \det \big( \Sigma_{\mathrm{cvf}} \big) }\Big\}
\end{align*}
where, \vspace{-0.1cm}
\begin{align*}
&\sr{\circ}{\pazocal{Q}}(\Delta_1,\Delta_2)\tri \Big\{\overline{n} \in {\mathbb Z}_+, \; \overline{d}_{4,i} \in (0,1),\; i=1, \ldots, \overline{n}, \nonumber \\
& \hspace{1.18cm}\overline{d}_{1,i}\in (0,\infty),\; i=1, \ldots, \overline{p}_1, \;\overline{d}_{2,i}\in (0,\infty), \;i=1, \ldots, \overline{p}_2: \nonumber \\
& \hspace{1.1cm} \sum_{i=1}^{\overline{p}_1} \overline{d}_{1,i}   \leq \Delta_1, \;  \sum_{i=1}^{{\overline{p}}_2} \overline{d}_{2,i}   \leq \Delta_2, \; Q_{(X_1,X_2)}- \Sigma_{(E_1,E_2)}\succeq  0 \Big\}\\
\end{align*} 
and
\begin{align*}
    &     \det\big(\Sigma_{\mathrm{\mathrm{cvf}}}\big)
   =  \det\big(I_{\overline{p}_1}-\overline{D}_3 \overline{D}_3\T\big) \\
 &  = \left\{
        \begin{array}{lll}
          1, & \mbox{if} & \overline{p}_{13} > 0, ~ \overline{p}_{23} > 0, ~
                 \overline{p}_{12} = \overline{p}_{22} = 0, \\
          \prod_{i=1}^{\overline{n}} \left(
              1-\overline{d}_{4,i}^2            \right), & \mbox{if} & \overline{p}_{12} = \overline{p}_{22} =\overline{n}, ~
                        \overline{p}_{13} \geq 0, ~ \overline{p}_{23} \geq 0.
        \end{array}
        \right.
\end{align*}    
\end{theorem}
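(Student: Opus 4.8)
The plan is to work directly with the determinant characterization of Theorem~\ref{thm:JointRDFChara}.(b) and exploit the fact that the canonical variable form is realized by block-diagonal congruence transformations. First I would record the effect of such a transformation on a determinant. Writing $S \tri \mathrm{Block}\text{-}\mathrm{diag}(S_1,S_2)$ for the map of Algorithm~\ref{alg:cvf}, one has $Q_{\mathrm{cvf}} = S Q_{(X_1,X_2)} S\T$, hence $\det(Q_{\mathrm{cvf}}) = (\det S)^2 \det(Q_{(X_1,X_2)})$ with $(\det S)^2 = (\det S_1)^2(\det S_2)^2$. Since each $S_i$ is, by Algorithm~\ref{alg:cvf}, a product of two orthogonal matrices and $D_i^{-1/2}$, we get $(\det S_i)^2 = (\det D_i)^{-1}$, so that $\det(Q_{(X_1,X_2)}) = \det(D_1)\det(D_2)\det(Q_{\mathrm{cvf}})$. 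Applying the identical computation to $(E_1,E_2)$ with its own canonical transformation $\overline{S}$ gives $\det(\Sigma_{(E_1,E_2)}) = \det(\overline{D}_1)\det(\overline{D}_2)\det(\Sigma_{\mathrm{cvf}})$. Substituting both factorizations into $\tfrac{1}{2}\log\{\det(Q_{(X_1,X_2)})/\det(\Sigma_{(E_1,E_2)})\}$ yields the objective displayed in the statement.

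Next I would translate the feasible set $\sr{\circ}{\pazocal{Q}}(\Delta_1,\Delta_2)$ into the canonical parameters. The two distortion constraints transform transparently: because the trace is invariant under orthogonal conjugation and $\overline{D}_i$ collects the eigenvalues of $\Sigma_{E_i}$, we have $\trace(\Sigma_{E_i}) = \sum_{j}\overline{d}_{i,j}$, so $\trace(\Sigma_{E_i})\le\Delta_i$ becomes $\sum_{j}\overline{d}_{i,j}\le\Delta_i$. The positive-definiteness requirements $\Sigma_{E_i}\succ 0$ needed to run Algorithm~\ref{alg:cvf} force $\overline{d}_{i,j}\in(0,\infty)$, while the construction of $\overline{D}_4$ in the algorithm restricts the canonical correlations to $\overline{d}_{4,i}\in(0,1)$; the semidefinite constraint $Q_{(X_1,X_2)}-\Sigma_{(E_1,E_2)}\succeq 0$ is retained in the original coordinates. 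The key point to argue is that the objective and the two distortion constraints depend on $\Sigma_{(E_1,E_2)}$ only through the canonical invariants $(\overline{D}_1,\overline{D}_2,\overline{D}_4)$, so the orthogonal degrees of freedom in $\Sigma_{(E_1,E_2)}$ are immaterial and the infimum may be taken over the invariants subject to the retained matrix inequality.

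The remaining computation is the evaluation of $\det(\Sigma_{\mathrm{cvf}})$. Since $\Sigma_{\mathrm{cvf}} = \left(\begin{smallmatrix} I_{\overline{p}_1} & \overline{D}_3 \\ \overline{D}_3\T & I_{\overline{p}_2}\end{smallmatrix}\right)$, the Schur-complement determinant formula gives $\det(\Sigma_{\mathrm{cvf}}) = \det(I_{\overline{p}_2})\det(I_{\overline{p}_1}-\overline{D}_3\overline{D}_3\T) = \det(I_{\overline{p}_1}-\overline{D}_3\overline{D}_3\T)$. Under the restriction $\overline{p}_{11}=\overline{p}_{21}=0$ the matrix $\overline{D}_3$ reduces to $\mathrm{Block}\text{-}\mathrm{diag}(\overline{D}_4, 0_{\overline{p}_{13}\times\overline{p}_{23}})$, whence $\overline{D}_3\overline{D}_3\T = \mathrm{Block}\text{-}\mathrm{diag}(\overline{D}_4^2, 0_{\overline{p}_{13}})$ and $I_{\overline{p}_1}-\overline{D}_3\overline{D}_3\T = \mathrm{Block}\text{-}\mathrm{diag}(I_{\overline{n}}-\overline{D}_4^2, I_{\overline{p}_{13}})$. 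Taking determinants and distinguishing whether $\overline{n}=0$ (giving $1$) or $\overline{n}=\overline{p}_{12}=\overline{p}_{22}>0$ (giving $\prod_{i=1}^{\overline{n}}(1-\overline{d}_{4,i}^2)$) produces the two displayed cases.

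I expect the main obstacle to be the reduction step in the second paragraph: carefully verifying that every feasible $\Sigma_{(E_1,E_2)}$ is captured by some choice of canonical parameters and, conversely, that each admissible parameter tuple corresponds to a feasible error covariance, so that the two optimization problems have the same value. This requires showing that the map sending $\Sigma_{(E_1,E_2)}$ to its canonical invariants is surjective onto the stated parameter set and that neither the objective nor the distortion constraints see the discarded orthogonal factors; the matrix inequality $Q_{(X_1,X_2)}-\Sigma_{(E_1,E_2)}\succeq 0$, being the only constraint that couples the $X$- and $E$-geometries, is precisely what prevents a full diagonalization and must be carried along unchanged.
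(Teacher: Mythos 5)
Your proposal is correct and follows essentially the same route as the paper, whose own proof is a one-line invocation of Theorem~\ref{thm:JointRDFChara}.(b), the canonical variable form of \cite[Definition~2.2]{anewapproach}, and Algorithm~\ref{alg:cvf}; your determinant factorizations $(\det S_i)^2=(\det D_i)^{-1}$, the trace identities, and the Schur-complement evaluation of $\det(\Sigma_{\mathrm{cvf}})$ are exactly the details the paper leaves implicit. The subtlety you flag --- that the coupling constraint $Q_{(X_1,X_2)}-\Sigma_{(E_1,E_2)}\succeq 0$ does not reduce to the canonical invariants and must be carried along in the original coordinates --- is genuine, and the theorem's definition of $\sr{\circ}{\pazocal{Q}}(\Delta_1,\Delta_2)$ handles it in precisely the way you propose.
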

\begin{proof} By Theorem~\ref{thm:JointRDFChara}.(b) and applying  \cite[Definition~2.2]{anewapproach} and  Algorithm \ref{alg:cvf} we obtain the results. 
\end{proof}

\begin{figure}[t]
  \centering
  \includegraphics[width = 0.99\columnwidth]{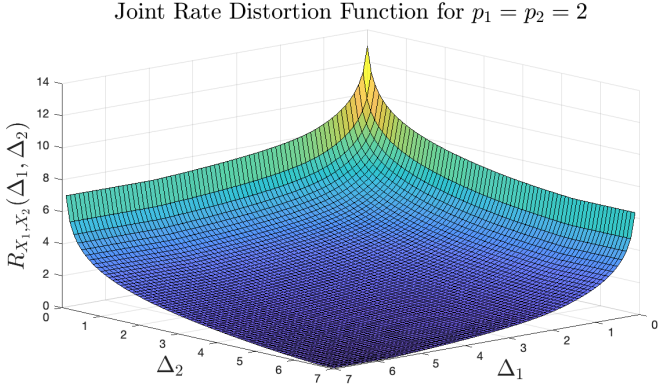}
  \vspace{-0.5cm}
\caption{Joint RDF $R_{X_1,X_2}(\Delta_1,\Delta_2)$ of source of  Section~\ref{sect:num},  $p_1=p_2=2$\vspace{-0.5cm}\vspace{-0.55cm}.}
  \label{fig:jointRDFplot}
\end{figure}

\begin{remark} $R_{X_1,X_2}(\Delta_1, \Delta_2)$ of Theorem~\ref{the:cvf}, is much easier to optimize, due to its structure.
\end{remark}

\section{Evaluation of the Joint RDF via SDP}
\label{sect:num}
We can express the optimization problem of Theorem~\ref{thm:JointRDFChara}  as a semidefinite program (SDP) as follows, define $\Xi_1\T =  \mathrm{Block}\text{-}\mathrm{diag} \big ( I_{p_1}\;0_{p_2} \big )$ and $\Xi_2\T =  \mathrm{Block}\text{-}\mathrm{diag} \big (0_{p_1} \; I_{p_2} \big )$,
    \begin{mini}[2]
	  {\Sigma_{(E_1,E_2)}}{ \frac{1}{2}\log\Big\{ \frac{\det\big( Q_{(X_1,X_2)}\big)}{\det\big(\Sigma_{(E_1,E_2)}\big) } \Big\} }{}{} 
	 \hspace{1cm} 	  	  	 \addConstraint{Q_{(X_1,X_2)}  - \Sigma_{(E_1,E_2)}\succeq 0,\;\; \Sigma_{(E_1,E_2)}}{\succeq 0} \label{eq:SDP}
	  \addConstraint{\trace\big (\Xi_i\T\Sigma_{(E_1,E_2)}\Xi_i\big )}{\le \Delta_i,\;i=1,2}
    \end{mini}
Then, we can solve the SDP \eqref{eq:SDP} by using the CVX~\cite{cvx}. Below,  we  calculate the optimal $\Sigma_{(E_1,E_2)}$ for a  multivariate example  $X_i: \Omega \rightarrow \mathbb{R}^2,\;i=1,2$,  with covariance,  \renewcommand\arraystretch{0.95}
\begin{align*}
{\small
Q_{(X_1,X_2)} = \begin{pmatrix}
3.929&-0.11&\vline&0.642&0.976\\
-0.11&2.629&\vline&-0.859&0.337\\
\hline
0.642&-0.859&\vline&2.142&1.797\\
0.976&0.337&\vline&1.797&3.495\\
\end{pmatrix} }.
\end{align*} 
Fig.~\ref{fig:jointRDFplot} depicts  $R_{X_1,X_2}(\Delta_1,\Delta_2), (\Delta_1,\Delta_2)\in [0,\infty)\times [0,\infty)$. 
Below we   distinguish two  cases.\\
{\it Case 1. } Given distortions $(\Delta_1,\Delta_2) = (0.4, 0.5)$, the solution of \eqref{eq:JointRDFOpti},  \eqref{eq:jointConstr} is given by  
\begin{align*}
\Sigma_{(E_1,E_2)} = \diag\big (0.2,0.2,0.25,0.25 \big ), \; Q_{(X_1,X_2)} - \Sigma_{(E_1,E_2)}\succ 0
\end{align*}
Distortions $\Delta_1$ and $\Delta_2$ are equally divided among the diagonal elements of the first and second 2-by-2 diagonal blocks of $\Sigma_{(E_1,E_2)}$ respectively, and the rest of the values are  zero. Hence,  $(0.4, 0.5)\in  \mathcal{D}_{(X_1,X_2)}$;  this  re-confirms  Theorem \ref{thm:JointRDFGen}.\\ {\it Case 2.} Given distortions $(\Delta_1,\Delta_2) = (1.65,1.85)$, the optimal error covariance matrix is given by, 
 \begin{align}
 {\small
\Sigma_{(E_1,E_2)} =\begin{pmatrix}
0.849&-0.0017&\vline&-0.0053&0.0036\\
-0.0017&0.801&\vline&-0.144&0.0961\\
\hline
-0.0053&-0.144&\vline&0.804&0.293\\
0.0036&0.0961&\vline&0.293&1.05\\
\end{pmatrix} } \nonumber
\end{align}
and  $Q_{(X_1,X_2)} - \Sigma_{(E_1,E_2)}\succeq 0$ but not positive definite. Unlike Case 1, $\Sigma_{(E_1,E_2)}$ is not block-diagonal, i.e., $\Sigma_{E_1,E_2} \neq 0$, as in Theorem \ref{thm:JointRDFGen}, hence $(1.65,1.85)\notin  \mathcal{D}_{(X_1,X_2)}$. This choice of distortions  corresponds to Lemma \ref{lem:char}.(b). 
\section{Conclusion}
The joint RDF ${R}_{X_1,X_2}(\Delta_1,\Delta_2)$, with individual  distortion criteria, is analyzed, with emphasis on the     structural properties of realizations of the  reproduction  RVs $(\widehat{X}_1,\widehat{X}_2)$ of $(X_1, X_2)$,  and   corresponding  optimal test channel distribution, $\mathbf{P}_{\widehat{X}_1,\widehat{X}_2|X_1,X_2}$.   Closed-form expressions of $R_{X_1,X_2}(\Delta_1,\Delta_2)$ are derived for a strictly positive surface of the distortion region,  and a numerical technique is presented, 
which verifies the closed-form expressions.


\section{Appendices}
{\it Proof of Theorem \ref{thm:JointRDFChara}}. \label{App:realizations}
Consider  \eqref{eq:RealGaussModel1}-\eqref{eq:RealGaussModel2}. To identify  $(H, Q_{(V_1,V_2)})$ such that  $ \overline{X}_i^{\mathrm{cm}} = {\bf E}\Big\{X_i\Big| \widehat{X}\Big\}=\widehat{X}_i, i=1,2$, we make use of the following preliminary calculations. The covariance of $X$ and  $\widehat{X}$ is,  
\begin{align}
Q_{X,\widehat{X}} & = \mathbf{E}\Big \{X\Big(HX + V\Big)\T\Big \} = Q_{(X_1,X_2)}H\T . \label{cross_1}
\end{align}
By \eqref{eq:RealGaussModel1}-\eqref{eq:RealGaussModel2},   the covariance of  $\widehat{X} = HX + V$ is
\begin{align}
Q_{\widehat{X}} &= \mathbf{E}\Big\{\widehat{X}\widehat{X}\T\Big\}  
=  HQ_{(X_1,X_2)}H\T + Q_{(V_1,V_2)},  \label{repr_1}
\end{align}
Consider the special case when \emph{Condition 1, \eqref{eq:MSEcon}} holds:
\begin{align}
&\Cov\big(X,\widehat{X}\big) \big\{\Cov \big(\widehat{X},\widehat{X}\big)\big\}^{-1} = I_{p_1 + p_2} \; \Longleftrightarrow \; Q_{X,\widehat{X}}Q_{\widehat{X}}^{-1} = I_{p_1 + p_2} \nonumber \\
&\Longrightarrow Q_{X}H\T = HQ_{X}H\T + Q_{(V_1,V_2)}  \hst \mbox{by (\ref{cross_1}),  (\ref{repr_1})}  \nonumber  \\
 &\Longrightarrow Q_{(V_1,V_2)} =Q_{(X_1,X_2)}H\T -HQ_{(X_1,X_2)}H\T . \label{eq:V}
\end{align}
Next, we turn to the identification of   $H$.  By the definition of covariance of the errors, then $\Sigma_{(E_1,E_2)} \tri \Cov(X,X|\widehat{X})$, and
\begin{align}
\Sigma_{(E_1,E_2)} &= \Cov(X,X) - \Cov(X,\widehat{X})\big\{\Cov(\widehat{X},\widehat{X})\big\}^{-1}\Cov(X,\widehat{X})\T  \nonumber\\
&=Q_{(X_1,X_2)} - HQ_{(X_1,X_2)}, \;\;  \mbox{by \eqref{eq:MSEcon},(\ref{cross_1})}   \nonumber \\
 &\hspace{-0.8cm }\Longrightarrow HQ_{(X_1,X_2)} =Q_{(X_1,X_2)}-\Sigma_{(E_1,E_2)} = Q_{(X_1,X_2)}H\T  \label{eq:H_1}\\
 & \hspace{-0.8cm }\Longrightarrow H = I_{p_1+p_2} - \Sigma_{(E_1,E_2)}Q_{(X_1, X_2)}^{-1}, \; \mbox{if $Q_{(X_1, X_2)}\succ 0$}.\nonumber 
\end{align}
Using  (\ref{eq:H_1}) into  (\ref{eq:V}) then we have 
\begin{align}
Q_{(V_1,V_2)} =& Q_{(X_1,X_2)}H\T -HQ_{(X_1,X_2)}H\T = Q_{(V_1,V_2)}\T \label{eq:V_1}\\
=&Q_{(X_1,X_2)}-\Sigma_{(E_1,E_2)} -HQ_{(X_1,X_2)}H\T. \label{eq:V_2}
\end{align}
Hence,  $(H, Q_{(V_1,V_2)})$ are obtained. The general case is shown by using properties of pseudoinverse. The rest follow.
\section{Acknowledgments}
\par This work was supported in parts by  the European Regional Development Fund and
 the Republic of Cyprus through the Research Promotion Foundation Projects EXCELLENCE/1216/0365 
and  EXCELLENCE/1216/0296. 

\newpage

\label{Bibliography}
\bibliographystyle{IEEEtran}
\bibliography{bibliography}

\begin{thebibliography}{10}
\providecommand{\url}[1]{#1}
\csname url@samestyle\endcsname
\providecommand{\newblock}{\relax}
\providecommand{\bibinfo}[2]{#2}
\providecommand{\BIBentrySTDinterwordspacing}{\spaceskip=0pt\relax}
\providecommand{\BIBentryALTinterwordstretchfactor}{4}
\providecommand{\BIBentryALTinterwordspacing}{\spaceskip=\fontdimen2\font plus
\BIBentryALTinterwordstretchfactor\fontdimen3\font minus
  \fontdimen4\font\relax}
\providecommand{\BIBforeignlanguage}[2]{{%
\expandafter\ifx\csname l@#1\endcsname\relax
\typeout{** WARNING: IEEEtran.bst: No hyphenation pattern has been}%
\typeout{** loaded for the language `#1'. Using the pattern for}%
\typeout{** the default language instead.}%
\else
\language=\csname l@#1\endcsname
\fi
#2}}
\providecommand{\BIBdecl}{\relax}
\BIBdecl

\bibitem{gray1973}
R.~Gray, ``A new class of lower bounds to information rates of stationary
  sources via conditional rate-distortion functions,'' \emph{IEEE Transactions
  on Information Theory}, vol.~19, no.~4, pp. 480--489, 1973.

\bibitem{sourcecoding}
R.~Gray and A.~Wyner, ``Source coding for a simple network,'' \emph{Bell System
  Technical Journal}, vol.~53, no.~9, pp. 1681--1721, 1974.

\bibitem{xiao}
J.-J. Xiao and Z.-Q. Luo, ``Compression of correlated {Gaussian} sources under
  individual distortion criteria,'' in \emph{43rd Allerton Conference on
  Communication, Control, and Computing}, 2005, pp. 438--447.

\bibitem{lapidoth}
A.~Lapidoth and S.~Tinguely, ``Sending a bivariate {Gaussian} over a {Gaussian}
  {MAC},'' \emph{IEEE Transactions on Information Theory}, vol.~56, no.~6, pp.
  2714--2752, 2010.

\bibitem{xu2015lossy}
G.~Xu, W.~Liu, and B.~Chen, ``A lossy source coding interpretation of wyner’s
  common information,'' \emph{IEEE Transactions on Information Theory},
  vol.~62, no.~2, pp. 754--768, 2015.

\bibitem{viswanatha2014lossy}
K.~B. Viswanatha, E.~Akyol, and K.~Rose, ``The lossy common information of
  correlated sources,'' \emph{IEEE Transactions on Information Theory},
  vol.~60, no.~6, pp. 3238--3253, 2014.

\bibitem{wyner1975common}
A.~Wyner, ``The common information of two dependent random variables,''
  \emph{IEEE Transactions on Information Theory}, vol.~21, no.~2, pp. 163--179,
  1975.

\bibitem{charalambous2020characterization}
C.~D. Charalambous and J.~H. van Schuppen, ``Characterization of conditional
  independence and weak realizations of multivariate {Gaussian} random
  variables: Applications to networks,'' in \emph{IEEE International Symposium
  on Information Theory (ISIT)}.\hskip 1em plus 0.5em minus 0.4em\relax IEEE,
  2020, pp. 2444--2449.

\bibitem{ISIT:2020}
C.~D. Charalambous, T.~Charalambous, C.~Kourtellaris, and J.~H. van Schuppen,
  ``Structural properties of nonanticipatory epsilon entropy of multivariate
  {G}aussian sources,'' in \emph{IEEE International Symposium on Information
  Theory (ISIT)}, 2020, pp. 2867--2872.

\bibitem{CDC:2019}
C.~D. Charalambous, C.~Kourtellaris, T.~Charalambous, and J.~H. van Schuppen,
  ``Generalizations of nonanticipative rate distortion function to multivariate
  nonstationary {G}aussian autoregressive processes,'' in \emph{2019 IEEE 58th
  Conference on Decision and Control (CDC)}, 2019, pp. 8190--8195.

\bibitem{anewapproach}
C.~D. Charalambous and J.~H. van Schuppen, ``A new approach to lossy network
  compression of a tuple of correlated multivariate {Gaussian} rvs,''
  \emph{arXiv preprint arXiv:1905.12695}, 2019.

\bibitem{cvx}
M.~Grant and S.~Boyd, ``{CVX}: Matlab software for disciplined convex
  programming, version 2.1,'' \url{http://cvxr.com/cvx}, Mar. 2014.

\end{thebibliography}

\end{document}